\documentclass[11pt]{article}
\usepackage{graphicx} 
\usepackage{color}
\usepackage{amsfonts}
\usepackage{tkz-euclide}
\usepackage{amsthm}
\usepackage{amsmath}
\usepackage{amssymb}
\usepackage[toc]{appendix}
\usepackage[a4paper, margin=1in]{geometry}
\usepackage{multirow}
\usepackage{adjustbox}
\usepackage{algorithm}
\usepackage{algpseudocode}
\usepackage{hyperref}
\usepackage{orcidlink}
\usepackage{subcaption}

\title{Seat Arrangement Problems under B-utility and W-utility}
\author{José Rodríguez \orcidlink{0009-0008-2470-6549}}

\begin{document}

\newtheorem{theorem}{Theorem}
\newtheorem{proposition}{Proposition}
\newtheorem{lemma}{Lemma}
\newtheorem{claim}{Claim}
\newtheorem{observation}{Observation}
\newtheorem{corollary}{Corollary}

\maketitle

\begin{abstract}
In the \textsc{seat arrangement} problem the goal is to allocate agents to vertices in a graph such that the resulting arrangement is optimal or fair in some way. Examples include an arrangement that maximises utility or one where no agent envies another. We introduce two new ways of calculating the utility that each agent derives from a given arrangement, one in which agents care only about their most preferred neighbour under a given arrangement, and another in which they only care about their least preferred neighbour. We also present a restriction on agent's preferences, namely $1$-dimensional preferences. We give algorithms, hardness results, and impossibility results for these types of utilities and agents' preferences. Additionally, we refine previous complexity results, by showing that they hold in more restricted settings.
\end{abstract}

\section{Introduction}
\label{sec:Intro}
Let us imagine that we are organizing a wedding. An important part of the event is the dinner, and for that we need to decide where our guests are seated. However, guests have preferences over who sits next to them. For example, couples want to be together, parents may prefer to be next to their kids, and kids may want to be next to other kids. We may even have the case of two relatives who dislike each other, and thus would rather not be seated together. The problem of deciding how to best allocate these wedding guests to seats is known as \textsc{seat arrangement} \cite{bodlaender_hedonic_2020}.

An instance of \textsc{seat arrangement} consists of a set of agents, who have cardinal valuations over each other, and a graph with exactly as many vertices as there are agents, where each vertex represents a ``seat" that some agent will be allocated to. Our goal is to find an arrangement, i.e., an allocation from agents to seats, that is optimal or fair in a precise sense according to the preferences of agents over each other.

In a given arrangement, each agent has a utility, which depends on who is next to the agent, where ``next'' corresponds to the neighbourhood in the seat graph. Given this, we can define four types of desirable arrangements and the associated problems of computing these arrangements or reporting that they do not exist: we can find an arrangement where the sum of utilities of all agents is highest, whose associated computational problem is \textsc{Maximum Welfare Arrangement (MWA)}; an arrangement where the utility of the worst-off agent is maximal, whose associated computational problem is \textsc{Maximin Utility Arrangement (MUA)}; an arrangement that is \textit{envy-free}, meaning that no agent wants to swap places with another agent, whose associated computational problem is \textsc{Envy-Free Arrangement (EFA)}; and an arrangement that is \textit{exchange-stable}, meaning that no two agents want to swap places with each other, whose associated computational problem is \textsc{Stable Arrangement (STA)}.

Up until now, the only type of utility that had been considered is one where the utility of an agent equals the sum of the valuations that said agent gives to each of their neighbours, which we call \textit{S-utility}. However, depending on the setting, agents may only care about who their best neighbour is, or conversely who their worst neighbour is. For example, in a wedding, guests may be happy being seated next to only one person whom they like, as each guest could spend the dinner talking to their favourite neighbour and ignore the others. On the other hand, in an office setting, employers may be keen to avoid sitting two employees who dislike each other together, even if each of them is surrounded by coworkers whom they like, as these two problematic employees may be unhappy being next to each other, and their work could suffer. In this work we consider these two types of utilities, referred to as \textit{B-utility} and \textit{W-utility} respectively, that are new concepts in the context of \textsc{seat arrangement}.

Additionally, because most problems related to \textsc{seat arrangement} are computationally hard, we present a restriction on the preferences of agents, referred to as $1$-dimensional preferences. Under this restriction, agents' preferences must be derived from a $1$-dimensional space. An example would be political ideology, where left-wing agents prefer other left-wing agents over right-wing agents, and similarly for agents located at other points in the left-right axis. Other restrictions on preferences that we study are strict preferences, where no agent can give the same valuation to two different agents; binary preferences, in which the only allowed valuations are zero and one; and symmetric preferences, where the valuation that an agent $p$ gives to an agent $q$ is equal to the valuation that $q$ gives to $p$.

We give both algorithms and hardness results for all four types of optimal or fair arrangements that we are interested in.

\subsection{Our contribution}
In this paper we introduce two different ways of calculating the utility that an agent derives from a seating arrangement: B-utility and W-utility. We study the problems of finding optimal arrangements in instances that have one of these two utilities, and present both algorithms and computational hardness results. In particular, we show that with B-utility and W-utility, MWA, MUA, and EFA are NP-hard, even if preferences are symmetric and binary, or symmetric and strict. On the other hand, if preferences are symmetric, we prove that with both B-utility or W-utility an exchange-stable arrangement is guaranteed to exist, although only in the first case the algorithm is shown to run in polynomial time.

\textsc{seat arrangement} with the most common utility, S-utility, is also studied, and many previous complexity results are refined. Thus, we prove that, with S-utility, MWA, MUA and EFA are NP-hard even if preferences are symmetric and strict. Another contribution is the introduction of a type of agents' preferences, $1$-dimensional preferences. We show that if the seat graph has a connected component which is a cycle with at least four vertices, or a path with at least three vertices, then no envy-free arrangement exists, if the utility type of the instance is S-utility. If the utility type is B-utility, and preferences are strict and $1$-dimensional, MWA, MUA, and EFA are NP-hard.

Table \ref{tab:resultSummary} provides a summary of the known results of \textsc{seat arrangement}, with our contributions included. If a given result is included in this paper, its corresponding cells will have, within square brackets, a number preceded by a letter, which indicates if the result is a theorem (T), an observation (O), or a proposition (P).

\begin{table}[h]
\begin{adjustbox}{width=1\textwidth}
\begin{tabular}{ll|lll|lll|lll|lll}
\hline
\multicolumn{2}{l}{}              & \multicolumn{3}{|c}{MWA}                                                              & \multicolumn{3}{|c}{MUA}                                                              & \multicolumn{3}{|c}{EFA}                                                                            & \multicolumn{3}{|c}{STA}                                                                                   \\ \hline
Val.                  & Graph    & B                          & S                          & W                          & B                          & S                          & W                          & B                          & S                                 & W                                 & B                                   & S                                & W                                \\ \hline
                       & Arb.     & {\color[HTML]{FE0000} N} [T \ref{thm:MaxMaxminSymm}] & {\color[HTML]{FE0000} N} \cite{bodlaender_hedonic_2020} & {\color[HTML]{FE0000} N} [T \ref{thm:MaxMaxminSymm}] & {\color[HTML]{FE0000} N} [T \ref{thm:MaxMaxminSymm}] & {\color[HTML]{FE0000} N} \cite{bodlaender_hedonic_2020} & {\color[HTML]{FE0000} N} [T \ref{thm:MaxMaxminSymm}] & {\color[HTML]{FE0000} N} [T \ref{thm:EfbUtilityTies}] & {\color[HTML]{FE0000} N}\cite{massand_graphical_2019}        & {\color[HTML]{FE0000} N}  [T \ref{thm:EnvyWutilSymmBinary}]      & {\color[HTML]{FE0000} N} [O \ref{obs:MatchPol}]         & {\color[HTML]{FE0000} N} \cite{bodlaender_hedonic_2020, cechlarova_complexity_2002, cechlarova_exchange-stable_2005}      & {\color[HTML]{FE0000} N} [O \ref{obs:MatchPol}]      \\ \cline{2-14} 
\multirow{-2}{*}{Arb.} & Match.   & {\color[HTML]{3166FF} P}[O \ref{obs:MatchPol}]   & {\color[HTML]{3166FF} P} \cite{bodlaender_hedonic_2020}  & {\color[HTML]{3166FF} P}[O \ref{obs:MatchPol}]   & {\color[HTML]{3166FF} P} [O \ref{obs:MatchPol}]  & {\color[HTML]{3166FF} P} \cite{bodlaender_hedonic_2020}  & {\color[HTML]{3166FF} P} [O \ref{obs:MatchPol}]  & {\color[HTML]{3166FF} P} [O \ref{obs:MatchPol}]  & {\color[HTML]{3166FF} P}  \cite{bodlaender_hedonic_2020}        & {\color[HTML]{3166FF} P} [O \ref{obs:MatchPol}]         & {\color[HTML]{FE0000} N}  [O \ref{obs:MatchPol}]      & {\color[HTML]{FE0000} N}  \cite{bodlaender_hedonic_2020, cechlarova_complexity_2002, cechlarova_exchange-stable_2005}     & {\color[HTML]{FE0000} N}  [O \ref{obs:MatchPol}]     \\ \hline
Strict                 & Arb.     & {\color[HTML]{FE0000} N} [T \ref{thm:MaxMaxminSymm}] & {\color[HTML]{FE0000} N} [T \ref{thm:MaxMaxminSymm}] & {\color[HTML]{FE0000} N} [T \ref{thm:MaxMaxminSymm}] & {\color[HTML]{FE0000} N} [T \ref{thm:MaxMaxminSymm}] & {\color[HTML]{FE0000} N} [T \ref{thm:MaxMaxminSymm}] & {\color[HTML]{FE0000} N} [T \ref{thm:MaxMaxminSymm}] & {\color[HTML]{FE0000} N} [T \ref{thm:EnvFree1DBUtil}]  & {\color[HTML]{FE0000} N}   [T \ref{thm:EfsUtility}]     & {\color[HTML]{FE0000} N} [T \ref{thm:EnvyWutilSymmStrict}]       & {\color[HTML]{FE0000} N}  [O \ref{obs:MatchPol}]        & {\color[HTML]{FE0000} N}  \cite{bodlaender_hedonic_2020, cechlarova_complexity_2002, cechlarova_exchange-stable_2005}     & {\color[HTML]{FE0000} N}  [O \ref{obs:MatchPol}]     \\ \hline
Symm.                  & Arb.     & {\color[HTML]{FE0000} N} [T \ref{thm:MaxMaxminSymm}] & {\color[HTML]{FE0000} N} \cite{bodlaender_hedonic_2020} & {\color[HTML]{FE0000} N} [T \ref{thm:MaxMaxminSymm}] & {\color[HTML]{FE0000} N} [T \ref{thm:MaxMaxminSymm}] & {\color[HTML]{FE0000} N} \cite{bodlaender_hedonic_2020}[T \ref{thm:MaxMaxminSymm}] & {\color[HTML]{FE0000} N} [T \ref{thm:MaxMaxminSymm}] & {\color[HTML]{FE0000} N} [T \ref{thm:EfbUtilityTies}] & {\color[HTML]{FE0000} N}   \cite{bodlaender_hedonic_2020, ceylan_optimal_2022}     & {\color[HTML]{FE0000} N}  [T \ref{thm:EnvyWutilSymmBinary}]      & {\color[HTML]{3166FF} P, } {\color[HTML]{32CB00}$\exists$} [T \ref{thm:exchStabBUtil}] & {\color[HTML]{32CB00} $\exists$} \cite{massand_graphical_2019} & {\color[HTML]{32CB00} $\exists$} [T \ref{thm:exchStabWUtil}] \\ \hline
Binary                  & Arb.     & {\color[HTML]{FE0000} N} [T \ref{thm:MaxMaxminSymm}] & {\color[HTML]{FE0000} N} \cite{bodlaender_hedonic_2020} & {\color[HTML]{FE0000} N} [T \ref{thm:MaxMaxminSymm}] & {\color[HTML]{FE0000} N} [T \ref{thm:MaxMaxminSymm}] & {\color[HTML]{FE0000} N} \cite{bodlaender_hedonic_2020}[T \ref{thm:MaxMaxminSymm}] & {\color[HTML]{FE0000} N} [T \ref{thm:MaxMaxminSymm}] & {\color[HTML]{FE0000} N} [T \ref{thm:EfbUtilityTies}] & {\color[HTML]{FE0000} N}   \cite{massand_graphical_2019, ceylan_optimal_2022}     & {\color[HTML]{FE0000} N}  [T \ref{thm:EnvyWutilSymmBinary}]      & Open & Open & Open \\ \hline
                       & Paths  &    {\color[HTML]{FE0000} N} [P \ref{prop:MaxMaxMinBUtil1D}]                        &        Open &    Open &    {\color[HTML]{FE0000} N} [P \ref{prop:MaxMaxMinBUtil1D}]                        &     Open  &  Open &  {\color[HTML]{FE0000} N} [T \ref{thm:EnvFree1DBUtil}]                            & {\color[HTML]{680100} $\nexists$} [T \ref{thm:EnvFree1dNotExists}] &  Open   & {\color[HTML]{3166FF} P, } {\color[HTML]{32CB00}$\exists$} [T \ref{thm:exchStabBUtil}] & {\color[HTML]{3166FF} P, } {\color[HTML]{32CB00}$\exists$} [P\ref{prop:ExchStab1D}] & {\color[HTML]{3166FF} P, } {\color[HTML]{32CB00}$\exists$} [P\ref{prop:ExchStab1D}] \\ \cline{2-14} 
\multirow{-2}{*}{1-D}  & Cycles &  {\color[HTML]{FE0000} N} [P \ref{prop:MaxMaxMinBUtil1D}]                          &  Open  &   Open &  {\color[HTML]{FE0000} N} [P \ref{prop:MaxMaxMinBUtil1D}]                          &  Open &   Open    &   {\color[HTML]{FE0000} N} [T \ref{thm:EnvFree1DBUtil}]                         & {\color[HTML]{680100} $\nexists$} [T \ref{thm:EnvFree1dNotExists}] & Open & {\color[HTML]{3166FF} P, } {\color[HTML]{32CB00}$\exists$} [T \ref{thm:exchStabBUtil}] & {\color[HTML]{3166FF} P, } {\color[HTML]{32CB00}$\exists$} [P\ref{prop:ExchStab1D}] & {\color[HTML]{3166FF} P, } {\color[HTML]{32CB00}$\exists$}  [P\ref{prop:ExchStab1D}]         \\ \hline
\end{tabular}
\end{adjustbox}
\caption{Summary of results for \textsc{seat arrangement}. Under each of the four problems (MWA, MUA,...), we have results for each of the three utility types. B stands for B-utility, S for S-utility and W for W-utility. Similarly, in the first column, Val stands for valuations, Arb means arbitrary seat graph, and Symm means symmetric valuations. In the second column Match means matching graph. Cells with {\color[HTML]{FE0000} N} indicate that the problem is NP-hard, while cells with {\color[HTML]{3166FF} P} indicate that the problem can be solved in polynomial time. In the case of EFA and STA, if a type of arrangement is guaranteed to exist, its cell in the table will contain {\color[HTML]{32CB00}$\exists$}. If the arrangement never exists, its cell will contain {\color[HTML]{680100} $\nexists$}. The results included in this paper are preceded by a letter, which indicates if the result is a theorem (T), an observation (O), or a proposition (P).}
\label{tab:resultSummary}
\end{table}


\subsection{Related work}
The \textsc{seat arrangement} problem was first considered by Bodlaender et al.\ \cite{bodlaender_hedonic_2020}. However, similar problems, such as \textsc{exchange-stable roommates} \cite{alcalde_exchange-proofness_1994, cechlarova_complexity_2002, cechlarova_exchange-stable_2005}, had already been studied prior to that work. In an \textsc{exchange-stable roommates} instance there is a set of agents, and the goal is to partition the set into subsets of size two (pairs) such that no two agents want to swap their partners, in which case we have an \textit{exchange-stable matching}. An application is the allocation of housing in college, in which each room houses two students. A matching is exchange-stable if no two students assigned to different rooms wish to switch their rooms, and hence their roommates. Thus, this problem is equivalent to determining the existence of an exchange-stable arrangement in \textsc{seat arrangement}, if we restrict ourselves to seat graphs where every connected component has two vertices. An important result in \textsc{exchange-stable roommates} is that an exchange-stable matching need not exist and deciding whether one does exist is NP-complete \cite{cechlarova_complexity_2002, cechlarova_exchange-stable_2005}. As \textsc{seat arrangement} is a generalization of \textsc{exchange-stable roommates}, this result holds for \textsc{seat arrangement} too. A related problem is \textsc{stable roommates} \cite{gale_college_1962}, in which the goal is to partition a set of agents into pairs, and a pair of agents blocks if they wish to be with each other instead of their respective partners in the matching. In contrast to the case for \textsc{exchange-stable roommates}, we can find a \textit{stable matching}, i.e., a matching with no blocking pairs, or report that none exists, in polynomial time \cite{gusfield_stable_1989}. Some other related problems include hedonic games \cite{brandt_handbook_2016}, Schelling games \cite{elkind_2019_schelling}, and topological distance games \cite{bullinger_2024_topological}.

Another important result for \textsc{seat arrangement} comes from Massand and Simon \cite{massand_graphical_2019}. In their paper they study a generalization of \textsc{seat arrangement} where agents derive an utility not just from the sum of their valuations over their neighbours, but also from an intrinsic valuation of the seat assigned to them. They show that, if preferences are symmetric, an exchange-stable arrangement always exists, and can be found by successively switching pairs of agents that desire to exchange seats, although the problem is PLS-complete. However, with arbitrary preferences, it was shown that finding an envy-free arrangement is NP-hard, even in the case where the valuation of every vertex is the same, meaning that EFA is hard in \textsc{seat arrangement} too.

Bodlaender et al.\ \cite{bodlaender_hedonic_2020} considered \textsc{seat arrangement} with S-utility, and showed that, for EFA, hardness remains even if preferences are symmetric and every graph component has at most three vertices. Likewise, MWA, the problem of finding an arrangement that maximises the total utility, is NP-hard, even for symmetric and binary preferences, for a large class of graphs. The same result applies to MUA, the problem of finding an arrangement that maximises the utility of the agent with the lowest utility, but only if Spanning Subgraph Isomorphism is NP-complete in regular graphs, which is currently unknown. On the other hand, if every component of the seat graph has two vertices, then EFA, MWA and MUA are solvable in polynomial time.

Due to the computational hardness of most problems related to \textsc{seat arrangement}, there has been some work on parameterized complexity, and in finding fixed-parameter tractable algorithms in particular \cite{cygan_parameterized_2015}. For instance, Bodlaender et al.\ \cite{bodlaender_hedonic_2020} studied the parameterized complexity of MWA, MUA, EFA, and STA where the parameter is the vertex cover number of the seat graph. They show, among other results, that MWA parameterized by the vertex cover number is W[1]-hard, and that both MUA and EFA are weakly NP-hard even if the vertex cover number is two. Ceylan et al.\ also gave parameterized complexity results for \textsc{seat arrangement} with S-utility \cite{ceylan_optimal_2023}. They considered two parameters, the number of non-isolated vertices in the seat graph, and the maximum number of non-zero valuations that any agent can give to other agents, and provide both FPT algorithms and W[1]-hardness results.

Another approach, which we also take on this work, is to restrict the preferences of agents. For instance, Berriaud et al. \cite{berriaud_stable_2023} study envy-freeness and exchange stability in instances with path and cycle graphs, where the agents can be partitioned into a number of classes, such that agents within one class are equivalent to each other. They find that these problems are solvable in polynomial time if the number of classes is bounded, but they remain NP-hard if they are not \cite{berriaud_stable_2023}. Similarly, Ceylan studied envy-freeness, and showed that EFA is NP-hard under S-utility, even with symmetric and binary preferences \cite{ceylan_optimal_2022}.

Another restriction of agents' preferences is $1$-dimensional preferences. However, with the exception of a paper by Wilczynski \cite{wilczynski_ordinal_2023}, $1$-dimensional preferences had not been studied before in the context of \textsc{seat arrangement}. Moreover, in that paper, $1$-dimensional preferences were only considered in the problem of finding popular arrangements, so no previous results for $1$-dimensional preferences and STA, EFA, MWA or MUA are known\footnote{Positive results with arbitrary or symmetric preferences extend to $1$-dimensional preferences.}. Nevertheless, $1$-dimensional preferences, and generalizations like single-peaked preferences or $d$-dimensional preferences (for $d \geq 1$), have been studied before in different matching and social choice problems. Some examples are \textsc{stable marriage} and \textsc{stable roommates} \cite{arkin_geometric_2009, bartholdi_stable_1986, bredereck_stable_2020, chen_et_al:LIPIcs.ESA.2022.36, wen_stable_2023}, voting theory \cite{brandt_handbook_2016, black1948rationale, brandt_2010_bypassing}, and resource allocation problems \cite{sprumont_1991_division}.

Up until now, in all papers related to \textsc{seat arrangement}, the only type of agent utility considered was S-utility. However, that is not the case in the related problem of coalition formation with hedonic preferences \cite{brandt_handbook_2016}, where the goal is to find a partition of a given set of agents that satisfies a concept of stability that depends on the agents' preferences. In such problems, the concept of agents only caring about their preferred, or least-preferred, agent in the partition is present in much of the literature \cite{cechlarova_computational_2003, cechlarova_stable_2004, cechlarova_stability_2001}. Determining the existence and finding such partitions is usually possible in polynomial time, but some variants are NP-hard. Coalition formation with hedonic preferences is related to \textsc{seat arrangement}, as an instance of the latter in which every connected component of the seat graph is a clique can be understood as an instance of coalition formation with hedonic preferences, as every connected component would be a distinct coalition. Nevertheless, results, positive or otherwise, from one problem do not translate to the other. The main difference is that, in coalition formation with hedonic preferences, a coalition of agents blocks \footnote{Other stability concepts do exist, see here \cite{brandt_handbook_2016} for an overview, but they are generally not equivalent to exchange-stability.} if each agent can break away from their assigned coalition and form a new coalition in which they are better off. However, this is not possible in \textsc{seat arrangement}, as an agent can only get a new seat if they swap seats with another agent. This point is also what differentiates \textsc{stable roommates} from \textsc{exchange-stable roommates}.

\subsection{Paper outline}
Section \ref{sec:Prel} gives definitions of necessary notation and terminology for this paper. Section \ref{sec:ExcStab} presents the results on STA, Section \ref{sec:EnvF} gives the results on EFA, and Section \ref{sec:MaxMaxmin} contains the results for MWA and MUA. Finally, Section \ref{sec:Conc} contains the conclusion.

\section{Preliminaries}
\label{sec:Prel}
A \textsc{seat arrangement} instance $I = (P,F,G,U)$ has four elements: a set $P$ of \textit{agents}, a set $F$ of \textit{valuation functions} $f_p: P \backslash \{p\} \rightarrow \mathbb{R}$ for each $p \in P$ that determines $p$'s preferences over all other agents, an undirected graph $G$, with as many vertices as there are agents and where the agents will be placed, called the \textit{seat graph}, and a \textit{utility type} $U$, which indicates the mechanism that is used to derive a utility for an agent depending on their placement in the seat graph. Our objective is to find an assignment from agents to vertices of $G$ that meets an optimality or fairness goal.

Let $p, q, r \in P$, $p \notin \{q,r\}$. The \textit{valuation} that $p$ gives to $q$ is the number $f_p(q)$. We say that $p$ \textit{strictly prefers} $q$ over $r$ if $f_p(q) > f_p(r)$, and is \textit{indifferent between} $q$ and $r$ if $f_p(q) = f_p(r)$.

We say that agents' preferences are:
\begin{itemize}
    \item \textit{nonnegative} if $\forall p,q \in P$, $p \neq q$, $f_p(q) \geq 0$;
    \item \textit{positive} if $\forall p,q \in P$, $p \neq q$, $f_p(q) > 0$;
    \item \textit{binary} if $\forall p,q \in P$, $p \neq q$, $f_p(q) \in \{0,1\}$;
    \item \textit{symmetric} if $\forall p,q \in P$, $p \neq q$, $f_p(q) = f_q(p)$;
    \item \textit{strict} if $\forall p,q,r \in P$, $p \notin \{q,r\}$, if $q \neq r$, then $f_p(q) \neq f_p(r)$;
    \item \textit{$1$-dimensional} if agents are points in the real line. Here, $d(p,q)$ represents the distance between $p$ and $q$ in the line, and $D$ denotes the distance between the leftmost and the rightmost agent. Then $\forall p \neq q \in P$, $f_p(q) = D - d(p,q) + 1$. $1$-dimensional preferences are, by definition, positive, because $D \geq d(p,q)$, and the $+1$ term ensures that $f_p(q) > 0$. Likewise, they are symmetric, as $f_p(q) = D - d(p,q) + 1 = D - d(q,p) + 1 = f_q(p)$. If no two agents are at the same point in the real line, their positions are \textit{unique}. We refer to the position or location of agent $a$ in the $1$-dimensional space as $l_a$, so if $l_a = 1$ then $a$ is at position $1$.
\end{itemize}
If agents' preferences are symmetric, we say that a pair of distinct agents $\{p, q\}$ has \textit{score} $s \in \mathbb{R}$ if $f_p(q) = f_q(p) = s$.

Let $G$ be the seat graph whose vertices will be assigned to agents from $P$. In every instance, the number of vertices in $G$ equals the number of agents in $P$. $G$ can be an arbitrary graph, but it can also belong to one of a number of possible graph classes. We say that $G$ is
\begin{itemize}
    \item a \textit{matching graph} if every connected component is a $K_2$, i.e., two vertices, joined by an edge;
    \item a \textit{path (cycle) graph} if every connected component consists of a path (cycle). A cycle with $k$ vertices will be referred to as a $k$-cycle;
    \item a \textit{cluster graph} if every connected component is a complete subgraph (a clique).
\end{itemize}

We define an \textit{arrangement} (alternatively an \textit{assignment}) to be a bijection $\sigma: P \rightarrow V(G)$. Thus $\sigma(p) \in G$ is the vertex in $G$ that agent $p$ is assigned to. Alternatively, $\sigma(p)$ is the seat of $p$ under $\sigma$. If agents' preferences are symmetric, we say that an edge $e \in E(G)$ has \textit{score} $s \in \mathbb{R}$ under $\sigma$ if $e = \{\sigma(p), \sigma(q)\}$ for $p,q \in P$, and the pair $\{p,q\}$ has score $s$. For any arrangement $\sigma$ and agents $p, q \in P$, $p \neq q$, let $\sigma^{p \leftrightarrow q}$ be the arrangement where

\begin{itemize}
    \item $\forall r \in P$, $r \notin \{p,q\}$, $\sigma^{p \leftrightarrow q}(r) = \sigma(r)$
    \item $\sigma^{p \leftrightarrow q}(p) = \sigma(q)$
    \item $\sigma^{p \leftrightarrow q}(q) = \sigma(p)$
\end{itemize}
Thus, $\sigma^{p \leftrightarrow q}$ is the same arrangement as $\sigma$ but with $p$ and $q$ swapping positions.

A function $u_{\sigma}: P \rightarrow \mathbb{R}$ assigns a real number to every agent, which represents the \textit{utility} that the agent derives under arrangement $\sigma$. We will consider three different types of utility functions. We say that $u_{\sigma}$ is an \textit{additively separable} utility function if $u_{\sigma}(p) = \sum_{v \in N(\sigma(p))} f_p(\sigma^{-1}(v))$, where $N(\sigma(p))$ is the set that includes all vertices adjacent to $\sigma(p)$. That is, the utility of $p$ under $\sigma$ is the sum of the valuations under $f_p$ for each of the agents neighbouring $p$. We will refer to additively separable utility functions as \textit{S-utility} functions also (S stands for sum). Alternatively, $u_{\sigma}$ is a \textit{B-utility} (respectively \textit{W-utility}) function if $u_{\sigma}(p) = \max_{v \in N(\sigma(p))} f_p(\sigma^{-1}(v))$ (resp. $\min_{v \in N(\sigma(p))} f_p(\sigma^{-1}(v))$). In other words, under B-utility, agent $p$ cares only about its best neighbour, while under W-utility, $p$ cares only about its worst neighbour. If an agent $p$ is assigned to a vertex with no neighbours, then its utility is $0$ under all utility types. In a \textsc{seat arrangement} instance $I = (P,F,G,U)$, $U$ specifies the type of all utility functions in the instance, with $U \in \{B,S,W\}$.

We will now consider solution concepts corresponding to optimal or fair arrangements. First, we define two specific types of fair arrangement. Under arrangement $\sigma$, we say that agent $p$ \textit{envies} agent $q$ if $u_{\sigma}(p) < u_{\sigma^{p \leftrightarrow q}}(p)$. Informally, $p$ envies $q$ if under arrangement $\sigma$ $p$ prefers to be seated where $q$ is seated. Agents $p$ and $q$ form an \textit{exchange-blocking} pair if they envy each other. An arrangement $\sigma$ is called \textit{envy-free} if no agent $p$ envies another agent $q$, and $\sigma$ is called \textit{exchange-stable} if no exchange-blocking pair exists. Envy-freeness implies exchange-stability.

We can now define formally the four different problems that will be studied.

\textsc{Maximum Welfare Arrangement (MWA)}. Find an arrangement $\sigma$ that maximises $\sum_{p \in P} u_{\sigma}(p)$, the sum of the utilities of all agents.

\textsc{Maximin Utility Arrangement (MUA)}. Find an arrangement $\sigma$ that maximises \\ $\min_{p \in P} u_{\sigma}(p)$, the utility of the agent with the smallest utility in $\sigma$.

\textsc{Stable Arrangement (STA)}. Find an exchange-stable arrangement, or report that none exists.

\textsc{Envy-Free Arrangement (EFA)}. Find an envy-free arrangement, or report that none exists.

Before presenting the main results, we introduce a small result that serves to adapt some previously known results for S-utility to B- and W-utility.

\begin{observation}
In a matching graph $G$, all three utility types considered in this paper are equivalent. Therefore, results for S-utility translate to B- and W-utility.
\label{obs:MatchPol}
\end{observation}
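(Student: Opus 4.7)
The plan is essentially to observe that in a matching graph the neighbourhood of every vertex is a singleton, and then note that $\sum$, $\max$, and $\min$ all coincide on a one-element multiset. More concretely, I would fix an arbitrary arrangement $\sigma$ and an arbitrary agent $p \in P$. By definition of a matching graph, the connected component containing $\sigma(p)$ is a $K_2$, so $N(\sigma(p)) = \{v\}$ for some unique vertex $v$, and $q := \sigma^{-1}(v)$ is the unique neighbour of $p$ under $\sigma$.

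The key step is then a direct expansion of each of the three utility definitions:
\begin{equation*}
\sum_{v' \in N(\sigma(p))} f_p(\sigma^{-1}(v')) \;=\; \max_{v' \in N(\sigma(p))} f_p(\sigma^{-1}(v')) \;=\; \min_{v' \in N(\sigma(p))} f_p(\sigma^{-1}(v')) \;=\; f_p(q),
\end{equation*}
since all three aggregations are applied to the single-element set $\{f_p(q)\}$. Hence $u_\sigma(p)$ is the same number under $U = B$, $U = S$, and $U = W$.

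Because this equality holds simultaneously for every agent and every arrangement, the four objective functions defining MWA, MUA, EFA, and STA (which are built purely from the values $u_\sigma(p)$ and from comparisons of $u_\sigma(p)$ with $u_{\sigma^{p \leftrightarrow q}}(p)$) produce identical yes/no answers and identical optimal arrangements across the three utility types. Consequently, any algorithm, hardness reduction, or existence/non-existence result established for one utility type on matching graphs transfers verbatim to the other two.

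There is no real obstacle here: the argument is a one-line unpacking of the definitions, and the observation is stated chiefly so that the matching-graph entries in Table \ref{tab:resultSummary} can be filled in by citing the known S-utility results of \cite{bodlaender_hedonic_2020} and \cite{cechlarova_complexity_2002, cechlarova_exchange-stable_2005} without having to reprove them for B-utility and W-utility.
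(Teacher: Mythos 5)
Your argument is correct and coincides with the paper's own proof: both reduce to the observation that every vertex in a matching graph has degree one, so $\sum$, $\max$, and $\min$ agree on the singleton neighbourhood of each agent. The extra remarks about why this makes the four problems' answers transfer are a harmless (and accurate) elaboration of what the paper leaves implicit.
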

\begin{proof}
In a matching graph, the degree of any vertex is $1$. Thus, for a given agent $p$ and arrangement $\sigma$, $\max_{v \in N(\sigma(p))} f_p(\sigma^{-1}(v)) = \sum_{v \in N(\sigma(p))} f_p(\sigma^{-1}(v)) = \min_{v \in N(\sigma(p))} f_p(\sigma^{-1}(v))$, i.e., $p$ has the same utility under all three different utility types because there is only one neighbour.
\end{proof}

Therefore, for both B-utility and W-utility, finding an exchange-stable arrangement is NP-hard if preferences are strict but otherwise arbitrary, even if the seat graph is a matching graph, because under S-utility finding a exchange-stable arrangement is NP-hard if preferences are strict and the seat graph is a matching graph \cite{bodlaender_hedonic_2020, cechlarova_complexity_2002, cechlarova_exchange-stable_2005}. However, if the seat graph is a matching graph, EFA, MWA and MUA are solvable in polynomial time \cite{bodlaender_hedonic_2020}.

\section{Exchange stability (STA)}
\label{sec:ExcStab}
In this section we study the problem of finding an exchange-stable arrangement, or reporting that none exists. By previous work \cite{bodlaender_hedonic_2020, cechlarova_complexity_2002, cechlarova_exchange-stable_2005} and Observation \ref{obs:MatchPol}, we know that, for all three utility types, if agents' preferences are arbitrary, then determining the existence of an exchange-stable arrangement is NP-complete, even if the seat graph is a matching graph. By contrast, if agents' preferences are symmetric and the utility type is S-utility, an exchange-stable arrangement always exists, and can be found using the following process: starting from an arbitrary arrangement, while an exchange-blocking pair $\{p,q\}$ exists, switch the seats of $p$ and $q$, and continue until no such pair exists \cite{massand_graphical_2019, bodlaender_hedonic_2020}. In this case, every step of this process produces an arrangement with strictly higher total utility than the previous one, and by a potential function argument it converges on an exchange-stable arrangement. This potential function argument was first introduced by Bogomolnaia and Jackson \cite{bogomolnaia_2002_stability}, and has been used since several times \cite{bilo_hedonic_2022, bullinger_2024_topological, massand_graphical_2019, wilczynski_ordinal_2023}. We now show that this process of swapping agents belonging to exchange-blocking pairs also works for W-utility.

In order to do that, we need to introduce some new concepts. For an instance of \textsc{seat arrangement} with symmetric preferences, let $S$ be the set of different scores of pairs of agents. Given an assignment $\sigma$, $n(\sigma) = \langle t_1, t_2,... \rangle$ is a vector of size $|S|$, where $t_i$ equals the number of edges in the seat graph $G$ with score equal to the $i$-th highest element in $S$. Let $n(\sigma)[i]$ be $t_i$, the $i$-th element of $n(\sigma)$. We say that an arrangement $\sigma$ is \textit{w-better} than an arrangement $\sigma'$ if there exists $i$ ($1 \leq i \leq |S|$) such that $n(\sigma)[i] < n(\sigma')[i]$ and, $\forall j > i$, $n(\sigma)[j] = n(\sigma')[j]$.

\begin{theorem}
Let $I=(P,F,G,W)$ be an instance of \textsc{seat arrangement}, with symmetric agents' preferences and W-utility. Then an exchange-stable matching always exists, and we can arrive at one by satisfying exchange-blocking pairs.
\label{thm:exchStabWUtil}
\end{theorem}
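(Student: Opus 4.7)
The plan is to use the vector $p(\sigma)$ as a potential function and show that every satisfaction of an exchange-blocking pair strictly improves it in the w-better order. Since each coordinate of $p(\sigma)$ is a nonnegative integer bounded by $|E(G)|$, only finitely many such vectors can occur, and the w-better relation is a strict partial order (a lexicographic comparison starting from the last coordinate, which is easily verified to be antisymmetric and transitive). So the process must terminate, and at termination no exchange-blocking pair exists; the resulting arrangement is therefore exchange-stable, proving both existence and the algorithmic claim.

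The main technical step is to show that, for an exchange-blocking pair $\{p,q\}$ under $\sigma$ with $\tau = \sigma^{p\leftrightarrow q}$, $p(\tau)$ is w-better than $p(\sigma)$. The only edges whose scores can change under the swap are those incident to $\sigma(p)$ or $\sigma(q)$; moreover, by symmetry the edge $\{\sigma(p),\sigma(q)\}$ (if it exists) keeps its score $f_p(q)=f_q(p)$ after the swap. Let $A_\sigma^p$, $A_\sigma^q$ be the multisets of scores of the remaining non-shared edges at $\sigma(p)$ and $\sigma(q)$ under $\sigma$, and let $A_\tau^p$, $A_\tau^q$ be their counterparts at $\tau(p)=\sigma(q)$ and $\tau(q)=\sigma(p)$ under $\tau$. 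From $u_\sigma(p) < u_\tau(p)$ I would deduce $\min A_\sigma^p < \min A_\tau^p$: if $u_\sigma(p)$ were realized by the shared edge then $u_\sigma(p) = f_p(q) \geq u_\tau(p)$, contradicting the strict inequality, so $u_\sigma(p) = \min A_\sigma^p$ and $\min A_\tau^p \geq u_\tau(p)$. The symmetric argument on $q$ gives $\min A_\sigma^q < \min A_\tau^q$.

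To conclude, let $m = \min(A_\sigma^p \cup A_\sigma^q)$ be the smallest score among the changing edges before the swap; without loss of generality $m = \min A_\sigma^p \leq \min A_\sigma^q$. Then $m < \min A_\tau^p$, and also $m \leq \min A_\sigma^q < \min A_\tau^q$, so $m < \min(A_\tau^p \cup A_\tau^q)$. Hence at least one edge of score $m$ is lost after the swap and no changed edge has score $\leq m$ under $\tau$, while the unchanged edges contribute identically to $p(\sigma)$ and $p(\tau)$. Writing $i$ for the index of $m$ in the enumeration $s_1 > \cdots > s_{|S|}$ of $S$, this yields $p(\tau)[i] < p(\sigma)[i]$ and $p(\tau)[j] = p(\sigma)[j]$ for every $j > i$, which is exactly the w-better condition. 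The main subtlety I would double-check is the degenerate case where $\sigma(p)$ or $\sigma(q)$ has only the other agent as a neighbour; but then $u_\sigma(\cdot) = f_p(q)$, and the post-swap neighbourhood still contains that same edge with the same score, so strict envy across the swap is impossible and such configurations cannot form an exchange-blocking pair.
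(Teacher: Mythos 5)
Your proposal is correct and follows essentially the same route as the paper: use $p(\sigma)$ as a potential, split the edges into those untouched by the swap and those incident to $\sigma(p)$ or $\sigma(q)$, show the latter part becomes w-better, and conclude termination by finiteness. If anything, your treatment is more careful than the paper's at the key step (explicitly excluding the shared edge $\{\sigma(p),\sigma(q)\}$, taking the global minimum $m$ over both changing neighbourhoods, and handling the degree-one degenerate case), all of which the paper leaves implicit.
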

\begin{proof}
Let $\sigma$ be an arrangement, with $q$ and $r$ forming an exchange-blocking pair. Let $\sigma(q) = u$ and $\sigma(r) = v$. Furthermore, let $G' \subset G$ be the subgraph of $G$ that contains every vertex and edge of $G$ except for $u$, $v$, and their adjacent edges. Let $\sigma_{G'}$ be the restriction of $\sigma$ to $G'$, so that for any agent $s \in P$, if $\sigma(s) \notin \{u,v\}$, then $\sigma(s) = \sigma_{G'}(s)$.

Now, let $\sigma^{q \leftrightarrow r}$ be the arrangement in which $q$ and $r$ switch positions with respect to $\sigma$. As the position of every other agent is the same, we have that $\sigma_{G'} = \sigma_{G'}^{q \leftrightarrow r}$, which means that $n(\sigma_{G'}) = n(\sigma_{G'}^{q \leftrightarrow r})$.

On the other hand, let $G''$ be the subgraph of $G$ that includes $u$, $v$, their neighbours, and every edge that includes one of $u$ or $v$. Thus, every edge of $G$ is included in exactly one of $G'$ and $G''$. As $q$ and $r$ form an exchange blocking pair, their utilities under $\sigma^{q \leftrightarrow r}$ are strictly higher than under $\sigma$. As the utility type is W-utility, this implies that the edge with the lowest score in the neighbourhood of $\sigma_{G''}^{q \leftrightarrow r}(q) = v$ has a strictly higher score than the edge with the lowest score in the neighbourhood of $\sigma_{G''}(q) = u$, and likewise for $r$. Hence, $n(\sigma_{G''}^{q \leftrightarrow r})$ is w-better than $n(\sigma_{G''})$. Hence $n(\sigma^{q \leftrightarrow r}) = n(\sigma_{G'}^{q \leftrightarrow r}) + n(\sigma_{G''}^{q \leftrightarrow r})$ is w-better than $n(\sigma) = n(\sigma_{G'}) + n(\sigma_{G''})$.

Therefore, every time we swap the seats of two agents who form an exchange-blocking pair, we get a strictly w-better arrangement. As the number of different arrangements is finite, this process eventually concludes, and the resulting arrangement is exchange-stable.
\end{proof}

However, there is no guarantee that this algorithm runs in polynomial time. In fact, STA with symmetric preferences and W-utility might also be PLS-complete, like STA with symmetric preferences and S-utility. We leave this as an open problem. As an aside, we should also note the similarity between our argument and that of Wilczynski \cite{wilczynski_ordinal_2023}, although their way of comparing profiles and our way work differently.

In the case of B-utility, if agents' preferences are symmetric, an exchange-stable arrangement always exists, by an argument similar to that of Theorem \ref{thm:exchStabWUtil}. Whilst the algorithm implied by Theorem \ref{thm:exchStabWUtil} does not necessarily run in polynomial time, the alternative approach that we now describe will give a polynomial-time algorithm to find an exchange-stable arrangement under B-utility if agents' preferences are symmetric.

We build an exchange-stable arrangement as follows. First, we order every pair of agents according to its score. Then, we select two agents $p_i, p_j \in P$ such that $\{p_i, p_j\}$ has the highest score, and assign them to neighbouring vertices $u, v$. Every agent that has $p_i$ as a favourite agent will be assigned to a neighbour of $u$, and similarly every agent that has $p_j$ as their favourite agent will be assigned to a neighbour of $v$, until either all vertices in $N(u)$ (respectively $N(v)$) have an agent assigned to them or all agents that preferred $p_i$ (respectively $p_j$) the most have been assigned. For all agents $p_k$ that have been assigned in the previous step, we allocate to the neighbours of $\sigma(p_k)$ any agent whose most preferred agent is $p_k$. We iterate this process with any new allocation, until we either run out of agents to assign or vertices to be assigned to.

If there are still two neighbouring vertices without agents assigned to them, we repeat the process outlined above, with a new pair $p_l, p_m \in P$ such that $\{p_l, p_m\}$ has the highest score among pairs made out of unassigned agents, and we keep doing this until there is no edge between two unassigned vertices. If there are agents still to be allocated, they can only be assigned to vertices where every neighbour has an agent allocated there. Thus, for every remaining agent $p$ and free vertex $v$ we can calculate the utility that $p$ would get if she is arranged into $v$. Therefore, the last step of the algorithm is to arbitrarily order the remaining agents, and assign each agent to a free vertex where she would get the highest utility among all free vertices. Algorithm \ref{alg:BUtilSTA} presents this procedure.

\begin{algorithm}[ht]
\caption{Algorithm to find an exchange-stable arrangement, when preferences are symmetric and the utility type is B-utility}
\label{alg:BUtilSTA}
\begin{algorithmic}
\Require Instance $I=(P,F,G,B)$ of Seat Arrangement
\Ensure Exchange-stable arrangement $\sigma$
\State $L = \emptyset$ \Comment{Set of agents allocated to $G$ with a free vertex as a neighbour}
\State $P' = P$
\State create set of ordered pairs $Q$

\Comment{While there are two vertices sharing an edge without an agent allocated to them}
\While{$\exists u,v \in V(G)$ such that $\{u,v\} \in E(G)$, $\sigma^{-1}(u) = \emptyset$, $\sigma^{-1}(v) = \emptyset$}

\Comment{Get a pair of agents $\{p,q\} \in Q$ with the highest score, with $p, q \in P'$}
\State $\{p,q\} = \arg \max_{\{s,t\} \in Q, s,t \in P'} score(\{s,t\})$
\State $\sigma(p) = u$, $\sigma(q) = v$
\State $P' = P' \setminus \{p,q\}$
\If{$\exists w \in N(u)$ such that $\sigma^{-1}(w) = \emptyset$}
\Comment{If there is a free vertex $w$ next to $u$}
    \State $L = L \cup \{p\}$
\EndIf
\If{$\exists w \in N(v)$ such that $\sigma^{-1}(w) = \emptyset$}
\Comment{If there is a free vertex $w$ next to $v$}
    \State $L = L \cup \{q\}$
\EndIf

\Comment{While there is an unallocated agent whose favourite agent is in $L$}
\While{$\exists r \in P'$ such that its favourite agent $s$ from $P' \cup L$ is in $L$}
\If{$\exists w \in N(\sigma(s))$ such that $\sigma^{-1}(w) = \emptyset$}
\Comment{If there is a free vertex $w$ next to $\sigma(s)$}
\State $\sigma(r) = w$
\State $P' = P' \setminus \{r\}$
\If{$\exists w' \in N(w)$ such that $\sigma^{-1}(w') = \emptyset$}
\Comment{If there is a free vertex $w'$ next to $w$}
    \State $L = L \cup \{r\}$
\EndIf
\Else
\Comment{If no free vertex next to $\sigma(s)$, remove $s$ from $L$}
    \State $L = L \setminus \{s\}$
\EndIf
\EndWhile
\EndWhile
\ForAll{$p \in P'$}
\Comment{We now deal with leftover agents}
    \State allocate $p$ to a free vertex in $G$ where it would get highest utility
\EndFor
\end{algorithmic}
\end{algorithm}

Lemma \ref{lem:Algorithm1TimeComplexity} establishes that Algorithm \ref{alg:BUtilSTA} runs in polynomial time, while Theorem \ref{thm:exchStabBUtil} establishes that Algorithm \ref{alg:BUtilSTA} produces an exchange-stable matching.

\begin{lemma}
Let $I=(P,F,G,B)$ be an instance of \textsc{seat arrangement} with symmetric preferences and B-utility. Then Algorithm \ref{alg:BUtilSTA} can be implemented to run in $O(n^3)$ time, where $n = |P|$, the number of agents in the instance. Under certain conditions, the algorithm can be executed in $O(n^2)$ time, implying that its complexity is linear in the size of the instance.
\label{lem:Algorithm1TimeComplexity}
\end{lemma}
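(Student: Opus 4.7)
I will bound the running time by partitioning the execution of Algorithm \ref{alg:BUtilSTA} into four phases and analysing each separately: (i) the initial construction and ordering of the pair set $Q$, (ii) the outer \textbf{while} loop that extracts successive highest-score pairs and seats them at adjacent free vertices, (iii) the inner \textbf{while} loop that greedily extends the arrangement via unallocated agents whose favourite lies in $L$, and (iv) the final \textbf{for} loop that handles leftover agents.

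For phase (i), there are $\binom{n}{2} = O(n^2)$ pairs, and sorting them by score takes $O(n^2 \log n)$, which is absorbed in the $O(n^3)$ bound. I keep the sorted order as a doubly-linked list so that successive lookups of the ``highest-score pair with both endpoints still in $P'$'' amortise to $O(1)$ per pair examined. For phases (ii) and (iii), the key data structure is, for each still-unallocated agent $r$, a pointer into a presorted list of $r$'s preferences which tracks $r$'s current favourite in $P' \cup L$. Each pointer only advances and moves at most $n$ positions over the whole run, giving $O(n^2)$ total pointer work. Each seating operation additionally scans the neighbourhood of the newly occupied vertex once to decide whether to update $L$, contributing $\sum_{v} \deg(v) = O(n^2)$ work overall. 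Since every agent is allocated at most once and every agent is added to or removed from $L$ a bounded number of times, phases (ii) and (iii) together cost $O(n^2)$.

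The bottleneck, and the main obstacle to a uniform $O(n^2)$ bound, is phase (iv). For each of the $O(n)$ leftover agents we may need to inspect every remaining free vertex, and computing B-utility at a free vertex of degree $d$ costs $\Theta(d)$. In the worst case this yields $O(n) \cdot \sum_{v \text{ free}} \deg(v) = O(n^3)$, which dominates all previous phases and gives the first part of the lemma. To establish the refined $O(n^2)$ bound I will impose the natural hypothesis that the seat graph is sparse, with $|E(G)| = O(n)$ (which covers in particular matchings, paths, cycles and all bounded-degree seat graphs). Under this condition the neighbourhood scans performed when placing a leftover agent sum to $O(n)$ per agent, so phase (iv) also costs $O(n^2)$; this matches the $\Theta(n^2)$ size of the preference input and makes the algorithm linear in the instance size, as required.
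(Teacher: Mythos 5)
Your phase decomposition and the $O(n^3)$ analysis essentially match the paper's: sort the $O(n^2)$ pairs in $O(n^2\log n)$, charge the pair-extraction and greedy-extension phases $O(n^2)$ via monotonically advancing pointers (the amortisation works because $P'\cup L$ only shrinks), and identify the leftover-agent phase as the $O(n^3)$ bottleneck since each of $O(n)$ leftover agents must evaluate its B-utility at every free vertex. That part is sound.

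There is, however, a gap in your proof of the refined $O(n^2)$ claim. You replace the paper's execution-dependent condition (few leftover agents, i.e.\ $k=O(\sqrt{n})$ so that the $O(k^2 n)$ final phase is $O(n^2)$, together with the pairs arriving pre-sorted) by the structural hypothesis $|E(G)|=O(n)$. That hypothesis does fix phase (iv): the utility evaluations for one leftover agent sum to $\sum_{v\ \mathrm{free}}\deg(v)=O(n)$, giving $O(n^2)$ over all leftover agents, and this is arguably a cleaner condition than the paper's since it is checkable from the input. But it does nothing about phase (i): under your hypothesis the total running time is still $O(n^2\log n)$ because of the comparison sort of the $\Theta(n^2)$ pairs, which is not $O(n^2)$ and hence not linear in the instance size. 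You must either add the assumption that the pairs are given already ordered by score (as the paper does), or justify a linear-time sort (e.g.\ bounded integer scores admitting radix/bucket sort), neither of which you state. As written, the second sentence of the lemma is not established by your argument.
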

\begin{proof}
We build three different data structures for representing the instance $I$: a list of pairs of agents $Q$, an array of agents $P$, and an array $G$ where each element is a vertex from the seat graph. We now describe each structure. First, $Q$ is a list of pairs of agents $\{p,q\}$. These pairs are ordered by their score, with ties between pairs that have the same score broken arbitrarily. Each pair $\{p,q\}$ stores the indices of agents $p$ and $q$ in $P$. Additionally, every pair that includes agent $p$ has a pointer to the next pair in the list that includes $p$. The pointer of the last such pair is null. Clearly these pointers can be created in time linear in the length of $Q$, i.e., $O(n^2)$ time.

Let us describe the array $P$ now. Every element in the array is an agent, which are ordered arbitrarily. Every agent $p$ has pointers to the previous and next agent in $P$ (so $P$ is also a doubly-linked list), and it also stores the index of the vertex $v$ that $p$ is arranged into under our arrangement $\sigma$. Initially, this value is null. Also, every agent $p$ has a list associated with it, and a pointer to access such list. This list, $F_p$ ($F$ from favourite), contains every agent $q$ such that $\{p,q\}$ is a pair with the highest score among all pairs that include $q$ and where the other agent is either unassigned or assigned to a vertex that has a free vertex as a neighbour (in the pseudocode above, the other agent is in $P'$ or $L$). That is, $F_p$, contains all agents that prefer $p$ the most among all eligible agents (initially all agents are eligible). Every element in $F_p$ stores the index of the original agent in $P$. Finally, for every agent $p$ we look at the pair $\{p, q\}$ with the highest score among all pairs including $p$, and count how many pairs including $p$ have the same score. Thus, this number, which we refer to as $c_p$, represents the number of joint-favourite agents that $p$ has. Every agent $p$ has one such $c_p$, as well as a pointer to a pair $\{p,r\} \in Q$ such that $\{p,r\}$ is the first pair that comes in $Q$ after all the pairs including $p$ with the highest score, that is, all pairs that we counted to arrive at $c_p$. In other words, $\{p,r\}$ is the next-best pair, and we store a pointer to it.

Finally, we represent the seat graph as an array of vertices $G$. Every vertex $v \in G$ stores the index of the agent $p$ assigned to $v$ (initially null), and has pointers to the previous and next vertices. Thus, $G$, like $P$, has a doubly-linked list on top of the array. Each vertex $v$ has an associated array that includes every vertex that $v$ is adjacent to. Every element in this array stores the index of its vertex in $G$, and every vertex $v$ has a pointer pointing to the first element of its adjacency array, which we call the iterating pointer of $v$. The array $G$ also has a pointer to the first vertex, which we will call the iterating pointer of $G$.

The algorithm has four steps that we need to analyse: creating all data structures, finding two adjacent vertices with no agent assigned to either, assigning a vertex to all agents that like a recently arranged agent $p$ the most, and dealing with the leftover agents. The first step takes $O(n^2 \log n)$ time, because there are $O(n^2)$ pairs of agents, and ordering them in order to create $Q$ takes $O(n^2 \log n)$ time. Building the other data structures, $P$ and $G$ (and the associated lists, like $F_p$ for every $p \in P$), takes $O(n^2)$ time.

Next comes the step of finding two free and adjacent vertices, and assigning agents to them. The first part, finding the two vertices, can be done in $O(n^2)$ time. We do so by, starting with the vertex $v$ that the iterating pointer of $G$ points at, checking if it is free. If it is, we go over the adjacency array, using the iterating pointer of $v$ to check if any adjacent vertex is free. Vertices that have an agent assigned to them do not become unassigned later, so these iterating pointers only need to advance in one direction over the entire execution of the algorithm, making this part take $O(n^2)$ time over the entire execution of the algorithm. The second part, finding two agents to arrange into the free vertices, also requires $O(n^2)$ time over the entire execution of the algorithm. This happens because we have to select a pair of unassigned agents with the highest score and assigned agents cannot become unassigned, so only one pass of $Q$ over the entire execution is needed, and, thanks to the different pointers and stored indices, checking if agents are assigned to vertices, and assigning them, can be done in constant time.

The third step consists of, for every agent $p$ assigned to a vertex $v$, assigning the agents that like $p$ the most, i.e., the agents in $F_p$, to vertices adjacent to $v$, until either all agents in $F_p$ are arranged into the graph or all vertices adjacent to $v$ have an agent assigned to them. For every $q \in F_p$, either $q$ has already been arranged into the graph, in which case we do nothing, $q$ has not been arranged yet but there is a vertex $u$ adjacent to $v$ with no assigned agent, in which case we arrange $q$ into $u$ (and we repeat this process for all agents in $F_q$), or $q$ has not been arranged but there is no such $u$. In this last case, we reduce $c_q$ by $1$. If, at some point of the execution, $c_q$ reaches $0$, it means that agent $q$ cannot be placed next to any of its most-preferred agents, as all of them have been assigned a vertex already and none of these vertices have an adjacent vertex with no agent assigned to it. In this case, we use the pointer from $q$ to the next-best pair $\{q,r\}$ and check all pairs with the same score that also include $q$. For each such $r$, we check if $r$ has been assigned a vertex $w_1$, and if there is a free vertex $w_2$ adjacent to $w_1$. In this case (which corresponds to $r \in L$ in the pseudocode), we assign $q$ to $w_2$. If all of the other agents in these pairs have been placed already and $q$ cannot be placed next to them, we move onto the next set of pairs. If these two cases do not hold, then at least some agent $s$ is in one of these pairs with $q$ and has not been assigned a vertex yet (in the pseudocode, such agent $s$ is in $P'$). In this case, we count all such agents, store that number in $c_q$, create a pointer to the first pair including $q$ after the pairs with the same score as $\{q,s\}$, and add $q$ to every $F_s$. Thanks to the use of the different pointers and data structures, for every agent $p$ arranged into the graph, notifying every $F_p$ and placing them if possible takes $O(n)$ time. Similarly, for every agent $p$, going down the list of pairs after $c_p$ reaches $0$ takes $O(n)$ time over the entire execution of the algorithm, because of the pointers linking the pairs that include $p$. Hence, in total, these two steps take $O(n^2)$ for all agents over the entire execution of the algorithm.

The final step occurs when there are no adjacent vertices that have no agents assigned to them. To make this last part more efficient, every time we assign an agent $p \in P$ to a vertex $v \in G$, we remove $p$ from the doubly-linked list that exists on top of the array $P$, and we do the same with $v$ and the doubly-linked list on top of $G$. This way, by the time we reach this step, these doubly-linked lists only contain the unassigned agents and the free vertices. Every unassigned agent $p$ must check every free vertex $v$. This process implies checking the entire adjacency list of $v$, which at this point only contains vertices with agents assigned to them, so that $p$ can calculate its utility when assigned to $v$. To do so, for every vertex $u$ adjacent to $v$, such that the agent assigned to $u$ is $q$, we must check the score of $\{p,q\}$. With an appropriate data structure, this last step can be done in constant time, so checking the entire adjacency list of vertex $v$ can be done in $O(n)$ time. If the number of unassigned agents is $k$, iterating over every remaining vertex for each agent can be completed in $O(kn)$ time. Thus, this last step takes $O(k^2n)$ time. As $k = O(n)$, dealing with the leftover agents takes $O(n^3)$ time, taken over the entire execution of the algorithm.

Putting this all together, the entire algorithm takes $O(n^2 \log n + n^2 + n^2 + n^3) = O(n^3)$ time. However, if $k$ is much smaller than $n$, then the last step takes $O(n)$ time. Similarly, if pairs are already ordered, the first step takes $O(n^2)$ time. Therefore, the algorithm may take $O(n^2)$ time on the number of agents. As the size of the instance is quadratic on the number of agents, the algorithm could be executed in linear time on the size of the instance, which means it is optimal if the conditions from this paragraph hold.
\end{proof}

\begin{theorem}
Let $I=(P,F,G,B)$ be an instance of \textsc{seat arrangement} with symmetric preferences and B-utility. Then Algorithm \ref{alg:BUtilSTA} produces an exchange-stable matching in polynomial time.
\label{thm:exchStabBUtil}
\end{theorem}
\begin{proof}
Let $\sigma$ be an arrangement for instance $I$ obtained as the output of Algorithm \ref{alg:BUtilSTA}. Let $p,q \in P$ be two different agents such that Algorithm \ref{alg:BUtilSTA} allocated a vertex to $p$ before allocating a vertex to $q$. Assume that $q$ envies $p$. Aiming for a contradiction, assume that $p$ envies $q$ too, which means that they form an exchange-blocking pair. We have that there are three possibilities for when $p$ was allocated a vertex.

The first possibility is that $p$ was chosen as one of the two agents in the pair with the highest score, across all remaining pairs. However, this means that $p$ preferred the other agent in the pair, $r$, over both agents that had not been allocated yet, and agents that were allocated and had a free vertex in their neighbourhood. Thus, if $p$ is envious of $q$, one of the neighbours of $q$ is an agent whose allocated vertex had no free neighbour at the time that $p$ was allocated. However, this is a contradiction, as $q$ was allocated after $p$, so if said vertex was free for $q$, it was free for $p$, too.

The second possibility is that one of the allocated agents $r$ had a vertex in its neighbourhood with no allocated agent, and $r$ was $p$'s most preferred agent in the subset of agents that, at that point in time, were either not allocated or had a free vertex in their neighbourhood. We arrive at a contradiction here because of similar reasons to those of the first possibility.

Finally, the third possibility is that $p$ was allocated at a point when no two free vertices shared an edge. At this point in the execution of the algorithm, for every free vertex, their entire neighbourhood had been allocated agents, so $p$ could calculate its utility in every free vertex, and choose the vertex where it could get a higher utility. As $p$ was allocated a vertex before $q$, it chose a vertex before $q$ did, so if it feels envy towards $q$, it means that it chose a vertex where its utility would not be maximised across all possibilities, a contradiction.

No matter at which point of the algorithm $p$ was allocated a vertex, as $p$ was allocated a vertex before $q$, if $p$ feels envy towards $q$ we get a contradiction. Hence, there is no pair of agents that feel envy towards one another, so the arrangement $\sigma$ is exchange-stable.
\end{proof}

The polynomial-time complexity of Algorithm \ref{alg:BUtilSTA} stands in contrast with the PLS-completeness of finding an exchange-stable arrangement under S-utility and symmetric preferences. In fact, similar problems to STA with symmetric preferences and S-utility are also PLS-complete. Some examples are in hedonic games \cite{gairing_2019_computing} and topological distance games \cite{bullinger_2024_topological}. Thus, it is of interest to study whether those other problems become polynomial-time solvable under notions equivalent to B-utility.

With symmetric preferences and either S-utility or W-utility, no algorithm is known that can find an exchange-stable arrangement in polynomial time. However, if agents' preferences are $1$-dimensional, we can always find an exchange-stable arrangement for either utility type, if the seat graph is a path graph or a cycle graph. We do so using the following procedure. In a path graph, take any connected component (a path) and allocate to one of its end vertices the leftmost agent in the $1$-dimensional space. Then we assign to the neighbour of that end vertex the leftmost unassigned agent, and we keep doing that until all vertices in the path have an agent allocated to them. Then we iterate with the remaining agents and connected components, until every agent has been allocated a vertex. Thus, this algorithm arranges agents consecutively. If the seat graph is a cycle graph, the procedure is identical, except that there are no end vertices, so the first agent to be assigned to a cycle is allocated to an arbitrary vertex within the cycle instead of an end vertex. This algorithm works in linear time. We now show that, for both S-utility and W-utility, it produces an exchange-stable arrangement.

\begin{proposition}
Let $I = (P,F,G,U)$ be an instance of \textsc{seat arrangement} with $1$-dimensional preferences, a path or cycle seat graph, and utility type either S-utility or W-utility. Then the procedure outlined above produces an exchange-stable arrangement in linear time in the number of agents.
\label{prop:ExchStab1D}
\end{proposition}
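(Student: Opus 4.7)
The linear-time claim is immediate: after sorting the agents by their 1-d position once at the start, the allocation visits each seat of $G$ exactly once and at each step selects the next-leftmost unassigned agent in $O(1)$ amortized time.

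For exchange stability I label the agents $a_1 \leq a_2 \leq \cdots \leq a_n$ in increasing 1-d order and note that, within every connected component of $G$, the greedy procedure places agents at consecutive positions along the path or around the cycle in this same order. Suppose for contradiction that $\{a_i, a_j\}$ with $i < j$ forms an exchange-blocking pair. Because components are filled consecutively, every old seat-neighbour of $a_j$ lies strictly to the right of $a_i$ on the real line, and every old seat-neighbour of $a_i$ lies strictly to the left of $a_j$; this is the structural fact I would exploit in both utility types.

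For W-utility I would first dispose of the extremal sub-cases: if $a_i$ sits at the leftmost position of its path component, its single old neighbour is $a_{i+1}$ and any neighbour inherited from $a_j$'s seat lies at least as far to the right, so $a_i$ cannot strictly improve; the symmetric observation handles $a_j$ rightmost in its path component, and the two 1-d-extremal agents of a cycle (whose worst neighbour is already at the full cycle diameter) are checked by inspection. In the remaining case, let $L_i$ be the 1-d position of $a_i$'s leftmost old neighbour and $R_j$ that of $a_j$'s rightmost old neighbour. After the swap, $a_i$'s worst new neighbour sits at $R_j$ and $a_j$'s at $L_i$, so strictly improving both utilities would require
\[
R_j - a_i < a_i - L_i \quad\text{and}\quad a_j - L_i < R_j - a_j,
\]
which chain to $2a_j < R_j + L_i < 2a_i$ and hence $a_j < a_i$, contradicting $i < j$.

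For S-utility I would instead evaluate $\Delta u_{a_i} + \Delta u_{a_j}$ directly using $f_p(q) = D + 1 - d(p,q)$. In the generic sub-case in which both agents are interior to their components and $|i-j| \geq 2$, the computation collapses to
\[
\Delta u_{a_i} + \Delta u_{a_j} = 2(a_i - a_{j-1}) + 2(a_{i+1} - a_j),
\]
with $a_{j-1}, a_{i+1}$ interpreted as the corresponding inward old neighbours (local to the respective components if $a_i$ and $a_j$ lie in different components). Both summands are strictly negative since $a_i < a_{j-1}$ and $a_{i+1} < a_j$, so the sum is strictly negative and both utility changes cannot be simultaneously positive. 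The main obstacle is the boundary case-analysis: the clean formula above assumes the degree of each swapped agent is preserved, so the sub-cases of path endpoints (where an additive $(D+1)(\Delta\deg)$ term enters), adjacent positions $|i-j|=1$ (where an edge is shared between the two agents and is preserved by the swap), and cycle-extremal positions each require a separate but routine verification that the sign of $\Delta u_{a_i} + \Delta u_{a_j}$ remains strictly negative whenever $i < j$, which rules out exchange-blocking in each case and completes the proof.
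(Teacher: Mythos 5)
Your proof is correct, and for the case the paper actually proves in detail (W-utility on a path) it is essentially the paper's own argument: extract from the mutual strict improvement the two distance inequalities $R_j-a_i<a_i-L_i$ and $a_j-L_i<R_j-a_j$ and combine them into the contradiction $a_j<a_i$ (the paper chains the same inequalities as $d(p_{-1},p)\leq d(q,p_{-1})<d(q,q_1)\leq d(p,q_1)<d(p,p_{-1})$ rather than adding them). Your explicit S-utility computation $\Delta u_{a_i}+\Delta u_{a_j}=2(a_i-a_{j-1})+2(a_{i+1}-a_j)$ is a correct local-welfare argument for a case the paper dismisses as ``similar'' (note only $\leq 0$ is needed, which avoids assuming unique positions), and your deferral of the endpoint, adjacency, and cycle wrap-around cases matches the paper's own level of detail.
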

\begin{proof}
We will only prove the case with W-utility and a seat graph that is a path graph; the other cases can be proved in a similar manner. Let $\sigma$ be an arrangement produced using the procedure above, and, aiming for a contradiction, assume it is not exchange-stable. Thus, there exist agents $p,q \in P$ such that they prefer arrangement $\sigma^{p \leftrightarrow q}$ over $\sigma$. Without loss of generality, assume that, in the $1$-dimensional instance, $p$ is to the left of $q$, and that neither $p$ nor $q$ are assigned to an end vertex, so that they both have two neighbours (the other cases, where one or both of $p$ and $q$ are allocated to an end vertex, are proven in a similar manner). As agents are allocated consecutively, the two neighbours of $p$, $p_{-1}$ and $p_1$, are the agents immediately to its left and to its right, respectively, in the $1$-dimensional space. Similarly, the neighbours of $q$ under $\sigma$ are $q_{-1}$ and $q_1$, such $q_{-1}$ is immediately to the left of $q$ in the $1$-dimensional space, and $q_1$ is immediately to the right.

If $p$ wants to swap seats with $q$, it means that $\min(f_p(q_{-1}), f_p(q_1))$ is strictly higher than $\min(f_p(p_{-1}),$ $f_p(p_1))$, and thus $f_p(q_1)$ is strictly higher than $\min(f_p(p_{-1}), f_p(p_1))$. We have that $q_1$ is to the right of $p$, but $p_1$ is the agent immediately to the right of $p$, so $f_p(p_1) = D - d(p, p_1) + 1 \geq D - d(p, q_1) + 1 = f_p(q_1)$. Therefore, we get that $f_p(q_1) > f_p(p_{-1})$, and thus $d(p, q_1) < d(p, p_{-1})$. Likewise, $f_q(p_{-1}) > f_q(q_1)$, and thus $d(q, p_{-1}) < d(q, q_1)$. However, $d(p_{-1}, p) \leq d(q, p_{-1}) < d(q, q_1) \leq d(p, q_1) < d(p, p_{-1})$, a contradiction. Therefore, $\sigma$ is exchange-stable.
\end{proof}

\section{Envy-freeness (EFA)}
\label{sec:EnvF}
We now present results on finding envy-free arrangements or determining that none exists. For all utility types, EFA is in NP, as for any given arrangement we can check for every pair of agents $p,q$ whether $p$ envies $q$. We have seen that, generally, determining whether an instance of \textsc{seat arrangement} contains an envy-free arrangement is NP-complete, if the utility type is S-utility. An important exception occurs when the seat graph is a matching graph. In that case, EFA is solvable in polynomial time \cite{bodlaender_hedonic_2020}. In general, an envy-free arrangement is not guaranteed to exist, so each hardness result in this section implies that the related problem of determining whether an envy-free arrangement exists is NP-complete.

First, we show that EFA with B-utility is NP-hard. In order to prove this result, we are going to reduce from Partition Into Triangles. In this problem, we are given a graph $G=(V,E)$, where $|V| = 3n$ for some $n \in \mathbb{Z}$. The goal is to decide whether there exists a partition of $V$ into $3$-element subsets such that the elements of each subset form a triangle in $G$, where a triangle means that each vertex in the subset is a neighbour of every other vertex. This problem is NP-complete, even if $G$ has maximum degree $4$ \cite{caprara2002packing, van_rooij_partition_2013}.

\begin{theorem}
\textsc{Envy-Free Arrangement} under B-utility is NP-hard even if agents' preferences are symmetric and binary, the seat graph has bounded degree, and every graph component is either a cycle or a star.
\label{thm:EfbUtilityTies}
\end{theorem}
\begin{proof}
We reduce from Partition Into Triangles (PIT). Let $G = (V,E)$ be an instance from PIT with $3n$ vertices, such that every vertex has degree at least $2$ (otherwise no partition exists), and maximum degree $4$. We create an instance $I = (P,F,H,B)$ of the \textsc{seat arrangement} problem. For every vertex $v_i \in G$ we create an agent $p_i \in P$. Let $V_i$ be the set of neighbours of $v_i$. Then, for $v_j, v_k \in V_i$, we create agent $q_i^{j,k} \in P$, with $j < k$. This agent represents the possibility of vertex $v_i$ being in the same triangle as vertices $v_j$ and $v_k$. For vertex $v_i$, the set of agents of type $q_i^{j,k}$ is $Q_i \subseteq P$. As $G$ has maximum degree $4$, $v_i$ has at most four neighbours in $I$, and as there are six ways of selecting two elements out of a four-element subset, $|Q_i| \leq 6$. Finally, for each vertex $v_i$ we create several agents, $r_i^l \in P$, $1 \leq l \leq 2 \cdot |Q_i| + 2$, which belong to subset $R_i \subseteq P$.

For agents $p_i, q_i^{j,k} \in P$, $f_{p_i}(q_i^{j,k}) = f_{q_i^{j,k}}(p_i) = 1$. If vertices $v_i, v_j, v_k$ are all neighbours of each other, $f_{q_i^{j,k}}(q_j^{i,k}) = f_{q_j^{i,k}}(q_i^{j,k}) = 1$, and similarly $f_{q_i^{j,k}}(q_k^{i,j}) = f_{q_k^{i,j}}(q_i^{j,k}) = 1$ and $f_{q_j^{i,k}}(q_k^{i,j}) = f_{q_k^{i,j}}(q_j^{i,k}) = 1$. For agents $p_i, r_i^l$, $f_{p_i}(r_i^l) = f_{r_i^l}(p_i) = 1$. For every other pair of agents $a,b$, $f_a(b) = f_b(a) = 0$. Finally, the seat graph $H$ consists of $n$ cycles of size $3$, and $3n$ stars, one for each agent. The $i$-th star, corresponding to vertex $i$, has one centre vertex, and $|Q_i| + |R_i| - 1$ leaves. As $1 \leq |Q_i| \leq 6$ and $4 \leq |R_i| \leq 14$, every star has between four and nineteen leaves, so $H$ has bounded degree.

Assume there is a partition into triangles. Then if vertices $v_i, v_j, v_k$ are in one triangle, we assign to one of the cycles of size three agents $q_i^{j,k}, q_j^{i,k}, q_k^{i,j}$. We do the same for every other triangle. In the $i$-th star, we put agent $p_i$ in the middle, and we assign to the leaves the agents from $Q_i$ that are not in any cycle, as well as the agents from $R_i$. As exactly one agent from $Q_i$ is assigned to a vertex in a cycle, and there are $|Q_i| + |R_i| - 1$ leaves, the assignment is valid. It is also an envy-free arrangement, as every agent gets utility $1$, the maximum possible. An example of the transformation, with the envy-free arrangement included, is given in Figure \ref{fig:ExampleTheoremEnvyFreeBUtilityTies}.

\begin{figure}[ht]
    \centering
    \begin{subfigure}[b]{0.35\linewidth}
        \centering
        \begin{tikzpicture}[main/.style = {node distance={25mm}, thick, draw, circle}]
            \node[main] (1) {$v_1$};
            \node[main] (2) [above right of=1] {$v_2$};
            \node[main] (3) [below right of=1] {$v_3$};
            \node[main] (4) [above left of=1] {$v_4$};
            \draw[line width=1mm] (1) -- (2);
            \draw[line width=1mm] (1) -- (3);
            \draw[line width=1mm] (2) -- (3);
            \draw (1) -- (4);
            \draw (2) -- (4);
            \draw (3) to [out=180,in=270] (4);
        \end{tikzpicture}
        \vspace{3cm}
    \end{subfigure}
    \hspace{0.05\linewidth}
    \begin{subfigure}[b]{0.55\linewidth}
        \centering
        \begin{tikzpicture}[main/.style = {node distance={25mm}, thick, draw, circle}]
            \node[main] (1) at (1,5) {$q_1^{2,3}$};
            \node[main] (2) at (-1,5) {$q_2^{1,3}$};
            \node[main] (3) at (0,6.732) {$q_3^{1,2}$};
            \node[main] (4) at (0,0) {$p_1$};
            \node[main] (5) at (2,0) {$q_1^{2,4}$};
            \node[main] (6) at (1.246, 1.564) {$q_1^{3,4}$};
            \node[main] (7) at (-0.446, 1.95) {$r_1^1$};
            \node[main] (8) at (-1.802, 0.868) {$r_1^2$};
            \node[main] (9) at (-1.802, -0.868) {$r_1^3$};
            \node[main] (10) at (-0.446, -1.95) {$r_1^4$};
            \node[main] (11) at (1.246, -1.564) {$r_1^5$};
            \draw (1) -- (2);
            \draw (1) -- (3);
            \draw (2) -- (3);
            \draw (4) -- (5);
            \draw (4) -- (6);
            \draw (4) -- (7);
            \draw (4) -- (8);
            \draw (4) -- (9);
            \draw (4) -- (10);
            \draw (4) -- (11);
        \end{tikzpicture}
    \end{subfigure}
    \caption{On the left, subgraph of an example PIT instance, with one triangle from the partition highlighted. On the right, part of an envy-free arrangement, focused on the agents derived from vertex $v_1$.}
    \label{fig:ExampleTheoremEnvyFreeBUtilityTies}
\end{figure}

Now, assume there exists an envy-free arrangement $\sigma$ in $I$. We first show that this implies that every agent has utility $1$ in $\sigma$.

\begin{claim}
Let $\sigma$ be an envy-free arrangement of $I$. Then, for all $p \in P$, $u_{\sigma}(p) = 1$.
\end{claim}
\begin{proof}\renewcommand{\qedsymbol}{\ensuremath{\blacksquare}}
Assume otherwise. Then there exists some $a \in P$ such that $u_{\sigma}(a) = 0$. However, there exists an agent $b$ such that $f_a(b) = 1$. As there are no isolated vertices in $H$, $b$ has at least one neighbour $c$ under $\sigma$. Thus, with a B-utility function, $u_{\sigma^{a \leftrightarrow c}}(a) = 1$, so $a$ is envious of $c$, contradicting the envy-freeness of $\sigma$.
\end{proof}

Now we prove that if, in $\sigma$, every agent has utility $1$, there exists a partition into triangles in $G$.

\begin{claim}
Let $\sigma$ be an arrangement of $I$ such that, for all $p \in P$, $u_{\sigma}(p) = 1$. Then $G$ can be partitioned into triangles.
\end{claim}
\begin{proof}\renewcommand{\qedsymbol}{\ensuremath{\blacksquare}}
We first show that in an envy-free arrangement $\sigma$ agent $p_i$ is allocated to the centre vertex in the $i$-th star (or a star with the same number of vertices). We have that every star has at least four leaves. However, every agent that is not of type $p_x$ has a preference of $1$ for at most three other agents. Therefore, if such an agent $a$ is assigned to a centre vertex, there exists at least one agent $b$ assigned to a leaf vertex such that $f_a(b) = f_b(a) = 0$, a contradiction. Furthermore, $p_i$ cannot be assigned to the centre vertex in a star with more vertices than the $i$-th star. Assume otherwise and let $p_i$ be assigned to the centre vertex of the $j$-th star, such that the number of vertices of this star is larger than the number of vertices of the $i$-th star. Thus, $|Q_j| + |R_j| - 1 > |Q_i| + |R_i| - 1$. The way the sets are constructed, this inequality implies that $|Q_j| > |Q_i|$ and $|R_j| > |R_i|$, meaning the $j$-th star has at least two more vertices than the $i$-th star, and that $|Q_j| + |R_j| - 1 > |Q_i| + |R_i|$. Hence, if $p_i$ is assigned to the centre of the $j$-th star, then there exists at least one agent $a$ assigned to a leaf vertex that does not belong to $Q_i \cup R_i$, so $f_{p_i}(a) = f_a(p_i) = 0$, a contradiction. Similarly, $p_i$ cannot be assigned to the centre vertex in a star with fewer vertices than the $i$-th star, because this implies that some other $p_j$ is assigned to the centre vertex of a star with more vertices than the $j$-th, which we have seen would imply that $u_{\sigma}(a) = 0$ for some $a \in P$, a contradiction.

As $p_i$ must be assigned to the centre vertex of the $i$-th star, and all agents in $R_i$ have a preference of $1$ over $p_i$ and of $0$ over everyone else, these agents must be assigned to leaves in the $i$-th star. The remaining leaves must be occupied by agents from $Q_i$, as any other agent would have utility $0$. Hence, exactly one agent from $Q_i$ is not assigned to a vertex from the $i$-th star, or any other star. Let this agent be $q_i^{j,k}$. In order for its utility to be $1$ and thus not be envious of a neighbour of $p_i$, one of the two other vertices in the cycle it has been assigned to must be occupied by either $q_j^{i,k}$ or $q_k^{i,j}$. Without loss of generality, let this be $q_j^{i,k}$. If the third vertex is occupied by some agent $a \neq q_k^{i,j}$, its utility would be $0$. Hence, the third vertex must be assigned to $q_k^{i,j}$. That is, we have a triangle in the original instance with vertices $v_i, v_j, v_k$. As only one agent from $Q_i$ is in some cycle component, we form only one triangle with vertex $v_i$, meaning that we obtain a partition in $I$.
\end{proof}

Therefore, there exists an envy-free arrangement $\sigma$ in $I$ if and only if there exists a partition into triangles in $G$. As the second problem is NP-complete, so is the first one.
\end{proof}

If preferences are symmetric and strict, instead of binary and symmetric, EFA with B-utility is still NP-hard. In fact, even with $1$-dimensional and strict preferences, it is NP-complete to determine the existence of an envy-free arrangement. In order to prove this statement, we will reduce from the \textsc{Bin Packing} problem \cite{garey_computers_1979}. In an instance of \textsc{Bin Packing}, we are given a set $X$ of items, such that each item $x_i \in X$ has a size $s_i \in \mathbb{N}$, a capacity $C \in \mathbb{N}$ for each bin, and a number of bins $K$. The set of all item sizes is $S$. The goal is to decide whether $X$ can be partitioned into at most $K$ different subsets such that the sum of the sizes of the elements in each subset is not greater than $C$.

Given an instance of \textsc{Bin Packing}, if $\sum_{x_i \in X} s_i < C \cdot K$, then we will add $C \cdot K - \sum_{x_i \in X} s_i$ elements of size $1$ each. Clearly this modified instance has a bin packing if and only if the previous instance has a bin packing too, because we can use the extra elements to ``fill" any bin that is not full. Then, in order to ensure that the size of all elements is at least $2$, we will multiply every size by $2$, as well as the bin capacity $C$. These two changes are done to ensure that the transformed \textsc{seat arrangement} instance is a valid instance where the number of agents equals the number of vertices, and to ensure that from each item $x_i \in X$ we can obtain at least $2$ agents.

\begin{theorem}
Let $I = (P,F,G,B)$ be a \textsc{seat arrangement} instance with B-utility, $1$-dimensional strict preferences with unique positions, and a seat graph in which every connected component is a path with the same number of vertices. Then \textsc{Envy-Free Arrangement} is NP-hard.
\label{thm:EnvFree1DBUtil}
\end{theorem}
\begin{proof}
Let $(X,S,C,K)$ be an instance of \textsc{Bin Packing}, where $X$ is the set of elements, $C$ the bin capacity, and $K$ the target number of bins. The minimum size of every item is $2$, and $\sum_{x_i \in X} s_i = C \cdot K$. The bins in the \textsc{Bin Packing} will be labelled $C_1,\dots,C_K$. Construct the \textsc{seat arrangement} instance $I = (P,F,G,B)$ as follows. Take item $x_1 \in X$ and create $s_1$ agents, $p_1^j\in P$ ($1 \leq j \leq s_1$). All of these agents belong to subset $P_1 \subset P$. Lay them out in the real line starting with $p_1^1$, which will be placed at $0$. Then $p_1^2$ will be placed a distance of exactly $1$ to the right of $p_1^1$, $p_1^3$ will be placed a distance of $1 + \varepsilon$ to the right of $p_1^2$, where $\varepsilon$ is a suitable small positive number, $p_1^4$ will be placed a distance of $1 + 2\cdot \varepsilon$ to the right of $p_1^3$, and so on, until $p_1^{s_1}$, which will be placed a distance of $1 + (s_1 - 2) \cdot \varepsilon$ to the right of $p_1^{s_1-1}$. After completing the placement of all agents associated with item $x_1$, we then proceed with the placement of item $x_2$. Its first agent, $p_2^1 \in P$, will be placed a distance of $2$ to the right of $p_1^{s_1}$. Then $p_2^2$ is placed at a distance of $1$ to the right of $p_2^1$, $p_2^3$ is placed at a distance of $1 + \varepsilon$ to the right of $p_2^2$, and so on. We repeat this process for all $x_i \in X$, in increasing subscript order. The seat graph $G$ consists of $K$ paths with $C$ vertices each, labelled $V^1$ to $V^K$. Figure \ref{fig:ExampleTheoremEnvyFree1DimensionalBUtility} below shows an example of how two items from the \textsc{Bin Packing} are transformed into agents of the \textsc{seat arrangement} instance $I$.

\begin{figure}[h]
    \centering
    \begin{tikzpicture}[main/.style = {node distance={25mm}, thick, draw, circle}]
        \tkzInit[xmin=-1,ymin=-1,xmax=13,ymax=1]

        \node[main] (1) at (0,0) {$p_1^1$};
        \node[main] (2) at (2,0) {$p_1^2$};
        \node[main] (3) at (4,0) {$p_1^3$};
        \node[main] (4) at (6,0) {$p_1^4$};
        \node[main] (5) at (10,0) {$p_2^1$};
        \node[main] (6) at (12,0) {$p_2^2$};
        \draw (1) -- node[above] {\fontsize{9}{10}\selectfont $1$} (2);
        \draw (2) -- node[above] {\fontsize{9}{10}\selectfont $1+\varepsilon$} (3);
        \draw (3) -- node[above] {\fontsize{9}{10}\selectfont $1+2\varepsilon$} (4);
        \draw (4) -- node[above] {\fontsize{9}{10}\selectfont $2$} (5);
        \draw (5) -- node[above] {\fontsize{9}{10}\selectfont $1$} (6);
    \end{tikzpicture}
\caption{Location of the agents transformed from $x_1$ and $x_2$ in the $1$-dimensional space, with $s_1 = 4$ and $s_2 = 2$}
\label{fig:ExampleTheoremEnvyFree1DimensionalBUtility}
\end{figure}

Given the way preferences are constructed, for any $p_i^l$, $2 \leq l \leq s_i$, its unique most preferred agent is $p_i^{l-1}$, while for $p_i^1$ its unique most preferred agent is $p_i^2$. Thus, in an envy-free arrangement $\sigma$, $p_i^1$ is next to $p_i^2$, who is also next to $p_i^3$, and so on.

Now suppose that there is a packing of the items into $K$ bins. For each $\ell$ ($1 \leq \ell \leq K$), suppose that the $\ell$-th bin contains items $x_{j_1},\dots,x_{j_r}$ for some $r\geq 1$, where $j_1<j_2<\dots<j_r$. Place the agents in $P_{j_1},P_{j_2},\dots,P_{j_r}$ onto the path $V^\ell$ in left-to-right order that respects the placement of these agents on the real line as constructed above. As every item $x_i$ is packed into a bin, all agents from $P_i$ are assigned a vertex in $G$ such that they feel no envy towards any other agent.

Conversely, suppose that there is an envy-free arrangement of the agents onto $G$. Let $k$ ($1\leq k\leq K$) be given and suppose that path $V^k$ contains an agent $p_i^r$. Then by the envy-free property, $V^k$ contains every agent in $P_i$. It follows that the packing obtained by placing item $x_i$ into bin $C_k$ whenever the family $P_i$ is placed on path $V^k$ is a legal packing of all items into $K$ bins. Therefore, as \textsc{Bin Packing} is NP-complete, so is \textsc{seat arrangement} with B-utility, $1$-dimensional strict preferences, and a seat graph in which every connected component is a path with the same number of vertices.
\end{proof}

As $1$-dimensional preferences, by definition, are symmetric, this means that the problem of finding an envy-free arrangement when preferences are strict and symmetric, or reporting that none exists, is NP-hard, if the utility type is B-utility. Another corollary is that the result also applies if the seat graph is a cycle graph, or a cluster graph. In the proof above, if we change each component in the seat graph for a cycle, or a clique, it is still the case that an envy-free arrangement exists if and only if the elements in $X$ can be packed into $K$ bins.

Likewise, finding an envy-free arrangement is NP-hard when the \textsc{seat arrangement} instance $I$ has utility type S-utility. By Bodlaender et al.\ \cite{bodlaender_hedonic_2020}, this result holds even for symmetric preferences, by Massand and Simon \cite{massand_graphical_2019} the result holds even for binary preferences, and by Ceylan \cite{ceylan_optimal_2022} the result holds even for symmetric and binary preferences. We show that NP-hardness holds even for symmetric and strict preferences.

\begin{theorem}
\textsc{Envy-Free Arrangement} under S-utility is NP-hard even for symmetric and strict preferences and a seat graph with bounded degree where every graph component is either a cycle or a star.
\label{thm:EfsUtility}
\end{theorem}
\begin{proof}
To prove this result, we will adapt the reduction from Theorem \ref{thm:EfbUtilityTies}. The agent set $P$, and the seat graph $H$, and preference functions are identical. Regarding preferences, we start with the preference functions from the proof of Theorem \ref{thm:EfbUtilityTies}. We then define new preferences $f_a'(b)$ for each pair of agents $a, b \in P$ using the preferences $f_a(b)$ defined in the proof of Theorem \ref{thm:EfbUtilityTies} and a set of mutually different infinitesimaly small real numbers $\varepsilon_i$ for $i = 1, \dots,|P|^2$ as follows: for each pair of agents $a, b$ we choose a different $\varepsilon_i$ and set $f_a'(b) = f_a(b) + \varepsilon_i$.

Just like in the previous case, if $G$ can be partitioned into triangles we can build an envy-free arrangement, and if there is an envy-free arrangement every agent has utility at least $1$, as per Claim 1 from the proof of Theorem \ref{thm:EfbUtilityTies}, and Claim 2 from the same proof still applies, with minor modifications to its proof.
\end{proof}

Similarly, in an instance with W-utility, finding an envy-free arrangement or reporting that none exists is NP-hard, even with symmetric and binary preferences, or symmetric and strict preferences, as we show with the following results.

\begin{theorem}
\textsc{Envy-Free Arrangement} under W-utility is NP-hard even for symmetric and binary preferences and a seat graph with maximum degree $2$ where every graph component is a $2$-cycle or a $3$-cycle.
\label{thm:EnvyWutilSymmBinary}
\end{theorem}
\begin{proof}
We reduce from Partition Into Triangles (PIT). Let $G=(V,E)$ be an instance from PIT with $3n$ vertices ($n \geq 2$), such that every vertex has degree at least $2$ (otherwise no partition exists). We create an instance $I$ of the \textsc{seat arrangement} problem. For every vertex $v_i$ we create an agent $p_i \in P$. For agents $p_i, p_j \in P$, $f_{p_i}(p_j) = f_{p_j}(p_i) = 1$ if $v_i$ and $v_j$ are neighbours in $G$, otherwise $f_{p_i}(p_j) = f_{p_j}(p_i) = 0$. Additionally, for each $p_i \in P$, we create two extra agents, $p_i^1$ and $p_i^2$. The preferences of these agents are as follows: $f_{p_i}(p_i^1) = f_{p_i^1}(p_i) = 1$, and $f_{p_i^1}(p_i^2) = f_{p_i^2}(p_i^1) = 1$. Finally, for $1 \leq i \leq 3n$, $f_{p_i^2}(p_{i+1}^1) = f_{p_{i+1}^1}(p_i^2) = 1$, where $i+1$ is taken modulo $3n$. For any other pair of agents $a,b \in P$, $f_{a}(b) = f_{b}(a) = 0$. That is, every $p_i$ gives a symmetrical valuation of $1$ to $p_j$ if $v_i$ and $v_j$ are neighbours, as well as to $p_i^1$, while $p_i^1$ gives a symmetrical valuation of $1$ to $p_i^2$. Finally, every $p_i^2$ gives a symmetrical valuation of $1$ to the ``next'' $p_{i+1}^1$. All other valuations are $0$. The seat graph $H$ contains $n$ $3$-cycles and $3n$ $2$-cycles, one for each $i$, $1 \leq i \leq 3n$. We will use the $p_i^1$ and $p_i^2$ agents to force agents of type $p_i$ to be arranged into $3$-cycles in any envy-free arrangement.

Assume there is a partition into triangles in $G$. Then we assign to each $3$-cycle in $H$ the three agents from each triangle. For each $1 \leq i \leq 3n$ we assign to a $2$-cycle agents $p_i^1$ and $p_i^2$. This is an envy-free arrangement, because every agent is getting the maximum utility possible, namely $1$. An example of such an envy-free arrangement is given in Figure \ref{fig:ExampleTheoremEnvyFreeWUtilityTies}.

\begin{figure}[ht]
    \centering
    \begin{subfigure}[b]{0.35\linewidth}
        \centering
        \begin{tikzpicture}[main/.style = {node distance={25mm}, thick, draw, circle}]
            \node[main] (1) {$v_1$};
            \node[main] (2) [above right of=1] {$v_2$};
            \node[main] (3) [below right of=1] {$v_3$};
            \node[main] (4) [above left of=1] {$v_4$};
            \draw[line width=1mm] (1) -- (2);
            \draw[line width=1mm] (1) -- (3);
            \draw[line width=1mm] (2) -- (3);
            \draw (1) -- (4);
        \end{tikzpicture}
        \vspace{2cm}
    \end{subfigure}
    \hspace{0.05\linewidth}
    \begin{subfigure}[b]{0.55\linewidth}
        \centering
        \begin{tikzpicture}[main/.style = {node distance={25mm}, thick, draw, circle}]
            \node[main] (1) at (1,5) {$p_1$};
            \node[main] (2) at (-1,5) {$p_2$};
            \node[main] (3) at (0,6.732) {$p_3$};
            \node[main] (4) at (-1,3) {$p_1^1$};
            \node[main] (5) at (1,3) {$p_1^2$};
            \node[main] (6) at (-1,1) {$p_2^1$};
            \node[main] (7) at (1,1) {$p_2^2$};
            \node[main] (8) at (-1,-1) {$p_3^1$};
            \node[main] (9) at (1,-1) {$p_3^2$};
            \draw (1) -- (2);
            \draw (1) -- (3);
            \draw (2) -- (3);
            \draw (4) -- (5);
            \draw (6) -- (7);
            \draw (8) -- (9);
        \end{tikzpicture}
    \end{subfigure}
    \caption{On the left, subgraph of an example PIT instance, with one triangle from the partition highlighted. On the right, part of an envy-free arrangement.}
    \label{fig:ExampleTheoremEnvyFreeWUtilityTies}
\end{figure}

Now, assume there exists an envy-free arrangement $\sigma$ in $I$. Then at least one agent of type $p_i^1$ is arranged into a $2$-cycle. Assume otherwise. Then all agents of type $p_i^1$ occupy all vertices in the $3$-cycles, because there are $3n$ agents of type $p_i^1$ and $n$ $3$-cycles. Thus, all agents of type $p_i^2$ are in $2$-cycles. In this case, however, every agent of type $p_i^1$ will have utility $0$. For each of these $p_i^1$, her respective $p_i^2$ will be in a $2$-cycle, so she will feel envy towards the other agent in said $2$-cycle, a contradiction. Thus, at least some $p_i^1$ is in a $2$-cycle. Fix this $p_i^1$. As $p_i^2$ only gives a valuation of $1$ to $p_i^1$ and $p_{i+1}^1$, she must be placed in a $2$-cycle with either of them, because if $p_i^2$ is placed in a different $2$-cycle or in a $3$-cycle, her utility will be $0$ and she will feel envy towards the agent that is placed in the same cycle as $p_i^1$. Assume first that $p_i^2$ is in the same cycle as $p_i^1$ (the other case, where $p_i^2$ and $p_{i+1}^1$ are in the same $2$-cycle, works analogously). Then, as $f_{p_{i-1}^2}(p_i^1) = 1$, $p_{i-1}^2$ must have utility $1$ in $\sigma$, or otherwise she will envy $p_i^1$. The only other agent that $p_{i-1}^2$ gives a valuation of $1$ to is $p_{i-1}^1$, so these two agents must be placed in the same $2$-cycle. But that means that $p_{i-2}^2$ and $p_{i-2}^1$ must be in the same $2$-cycle too. Therefore, by extending this argument, for every pair of agents $p_j^1$ and $p_j^2$, they are allocated to the same $2$-cycle under $\sigma$. If, instead, $p_i^2$ and $p_{i+1}^1$ are in the same $2$-cycle, we get that every pair of agents $p_j^2$ and $p_{j+1}^1$ are in the same $2$-cycle. As there are $6n$ agents of type $p_i^1$ and $p_i^2$, and $3n$ of these $2$-cycles, it means that no agent of type $p_i$ is arranged into a $2$-cycle, and so all are arranged into the $3$-cycles.

For every agent $p_i$, we have that $u_{\sigma^{p_i \leftrightarrow p_i^2}}(p_i) = 1$, because $f_{p_i}(p_i^1) = 1$. Hence, as $\sigma$ is envy-free, $u_{\sigma}(p_i) = 1$, which means that, if $p_j$ and $p_k$ are the neighbours of $p_i$, $f_{p_i}(p_j) = 1$, and $f_{p_i}(p_k) = 1$. Thus, in the Partition Into Triangles instance $G$, $v_j$ and $v_k$ are in the neighbourhood of $v_i$. As this is the case for every $p_i$, we can get a triangle in $G$ from every $3$-cycle, and because every agent in these $3$-cycles represents a different vertex, each vertex from $G$ is in a unique triangle, meaning that we can partition $G$ into triangles.

Therefore, there exists an envy-free arrangement $\sigma$ in our \textsc{seat arrangement} instance $I$ if and only if there exists a partition into triangles in $G$. As the second problem is NP-complete, so is the first one.
\end{proof}

\begin{theorem}
\textsc{Envy-Free Arrangement} under W-utility is NP-hard even for symmetric and strict preferences and a seat graph where every connected component is a cycle.
\label{thm:EnvyWutilSymmStrict}
\end{theorem}
\begin{proof}
We reduce from Partition Into Triangles (PIT). Let $G = (V,E)$ be an instance of PIT with $3n$ vertices, such that every vertex has degree at least $2$ (otherwise no partition exists), and maximum degree $4$. We create an instance $I = (P,F,H,W)$ of the \textsc{seat arrangement} problem. For every vertex $v_i \in G$, let $V_i$ be the set of neighbours of $v_i$. Then, for $v_j, v_k \in V_i$, we create agent $p_{i,1}^{j,k} \in P$, with $j < k$. This agent represents the possibility of vertex $v_i$ being in the same triangle as vertices $v_j$ and $v_k$ and hence is created only when $v_i$, $v_j$ and $v_k$ are all adjacent to each other. We also create agents $p_{i,2}^{j,k}$ and $p_{i,3}^{j,k}$. Every agent of the form $p_{i,r}^{j,k}$ belongs to subset $P_i \subset P$. As $G$ has maximum degree $4$, $v_i$ has at most four neighbours in $G$, and as there are six ways of selecting two elements out of a four-element subset, $|P_i| \leq 6 \cdot 3 = 18$. If at this point $|P_i| < 18$, because the degree of $v_i$ is less than $4$, we create extra agents until $|P_i| = 18$. That is, if $|P_i| = 18 - l$ for some $l \geq 1$, we create $l$ additional agents $q_i^1, q_i^2,..., q_i^l$, which are then added to $P_i$. Finally, we create six extra agents, $s_1$, $s_2$, and $s_3$, and $t_1$, $t_2$, and $t_3$, which will be used to ``enforce'' that an envy-free arrangement corresponds to a partition into triangles. Let $S = \{s_1, s_2, s_3\}$ and $T = \{t_1, t_2, t_3\}$.

If vertices $v_i, v_j, v_k$ are all neighbours of each other, $f_{p_{i,1}^{j,k}}(p_{j,1}^{i,k}) = f_{p_{j,1}^{i,k}}(p_{i,1}^{j,k}) = 3$. Similarly $f_{p_{i,1}^{j,k}}(p_{k,1}^{i,j}) = f_{p_{k,1}^{i,j}}(p_{i,1}^{j,k}) = 3$ and $f_{p_{j,1}^{i,k}}(p_{k,1}^{i,j}) = f_{p_{k,1}^{i,j}}(p_{j,1}^{i,k}) = 3$.

To define valuations for agents within the same $P_i$ subset, we need an order relation $\prec_{P_i}$. Within each $P_i$ subset, $p_{i,r}^{j,k} \prec_{P_i} p_{i,c}^{\alpha,\beta}$ if $j<\alpha$, or $j=\alpha$ and $k<\beta$, or $j=\alpha$, $k=\beta$ and $r < c$. Likewise, $p_{i,r}^{j,k} \prec_{P_i} q_i^l$ for any $j,k,r,l$, and $q_i^j \prec_{P_i} q_i^k$ if $j < k$. Thus, for example, $p_{i,1}^{j,k} \prec_{P_i} p_{i,2}^{j,k} \prec_{P_i} p_{i,3}^{j,k} \prec_{P_i} p_{i,1}^{j,m} \prec_{P_i} p_{i,2}^{j,m} \prec_{P_i} p_{i,3}^{j,m} \prec_{P_i} \dots \prec_{P_i} q_i^1 \prec_{P_i} \dots \prec_{P_i} q_i^l$. Any subset $S \subseteq P_i$ also has the order relation $\prec_S$, which is inherited from $P_i$.

Let us now define a relation between agents from any subset $S \subseteq P_i$. For $a,b \in S$, $a \prec_S' b$ if $a \prec_S b$ and $\nexists c \in S$ such that $a \prec_S c \prec_S b$. If $a \prec_S' b$, we say that $a$ is the \textit{immediate predecessor} of $b$ in $S$, and $b$ is the \textit{immediate successor} of $a$ in $S$. Finally, to ensure that every agent has an immediate predecessor and an immediate successor, the last element of $S$, according to $\prec_S$, is the immediate predecessor in $S$ of the first element according to $\prec_S$, and the first element is the immediate successor in $S$ of the last element. This means that we can arrange the agents of $P_i$, or any subset $S$, in a circle, such that every agent is adjacent to her immediate predecessor and her immediate successor.

We now define valuations for agents within the same $P_i$ subset. If $a \prec_{P_i}' b$, then $f_a(b) = f_b(a) = 2$. For example, $f_{p_{i,1}^{j,k}}(p_{i,2}^{j,k}) = f_{p_{i,2}^{j,k}}(p_{i,1}^{j,k}) = 2$, $f_{p_{i,2}^{j,k}}(p_{i,3}^{j,k}) = f_{p_{i,3}^{j,k}}(p_{i,2}^{j,k}) = 2$ and $f_{q_i^j}(q_i^{j+1}) = f_{q_i^{j+1}}(q_i^j) = 2$. Finally, because in an envy-free arrangement one agent from $P_i$ is assigned to a $3$-cycle, and the remaining agents to the same $17$-cycle (as we are going to see below), we define valuations for some pairs of agents within $P_i$. First, let $Q_i \subseteq P_i$ be a subset of $P_i$ that contains every agent from $P_i$ except those of the form $p_{i,1}^{j,k}$. We assume that $\prec_{Q_i}$ is inherited from $P_i$. For every agent of the form $p_{i,2}^{j,k}$, let $a \in Q_i$ be her immediate predecessor in $Q_i$. This agent $a$ is either of the form $p_{i,3}^{\alpha,\beta}$ or $q_i^l$, and so is not the immediate predecessor of $p_{i,2}^{j,k}$ in $P_i$. Then $f_{p_{i,2}^{j,k}}(a) = f_a(p_{i,2}^{j,k}) = 1.5$. An intuition behind these valuations is that we want agents to give a high valuation towards their immediate predecessor and successor, because those are going to be her neighbours in an envy-free arrangement. Because in an envy-free arrangement one agent of the form $p_{i,1}^{j,k}$ is not assigned to the same cycle as the other agents from $P_i$, we need to ensure that some agents within $P_i$ give a slightly smaller valuation to agents that are their immediate predecessors or successors after removing said $p_{i,1}^{j,k}$. We refer to Figure \ref{fig:ExampleTheoremEnvyFreeWUtilityStrict} for an illustration of this ordering and these valuations.

Let us now define the valuations of agents in $S$ and $T$. For agents $s_i, s_j$, $f_{s_i}(s_j) = f_{s_j}(s_i) = 3$, and similarly for agents $t_i, t_j$, $f_{t_i}(t_j) = f_{t_j}(t_i) = 3$, where in both cases $1 \leq i,j \leq 3$. For agents $s_i, t_j$, $1 \leq i,j \leq 3$, $f_{s_i}(t_j) = f_{t_j}(s_i) = 2$. Finally, for agents $s_i, t_j, a$, $1 \leq i,j \leq 3$ and $a \notin S \cup T$, $f_{s_i}(a) = f_{a}(s_i) = 1$, and $f_{t_j}(a) = f_{a}(t_j) = -1$. For every other pair of agents $a,b$, $f_a(b) = f_b(a) = 0$. These valuations are not strict, so in order to make them strict we arbitrarily break ties in valuations, in a manner similar to the technique used in the proof of Theorem \ref{thm:EfsUtility}. Thus, to each valuation we add arbitrarily small, positive, and pairwise distinct $\varepsilon_i$, where $\varepsilon_i$ is defined as in the proof of Theorem \ref{thm:EfsUtility}.

The seat graph $H$ consists of $n+2$ cycles of size $3$, and $3n$ cycles of size $17$, one for each vertex in the PIT instance.

Assume there is a partition into triangles in $G$. Then if vertices $v_i, v_j, v_k$ are in one triangle, we assign agents $p_{i,1}^{j,k}, p_{j,1}^{i,k}, p_{k,1}^{i,j}$ to one of the $3$-cycles in $H$. We do the same for every other triangle in $G$. In the remaining two $3$-cycles, we assign $s_1$, $s_2$ and $s_3$ to one of them, and $t_1$, $t_2$ and $t_3$ to the other. Then, for each $17$-cycle we assign to its seats all agents from a given $P_i$ that are not assigned to some $3$-cycle, such that every agent is adjacent to her immediate predecessor and successor within $P_i \backslash \{p_{i,1}^{j,k}\}$, where $p_{i,1}^{j,k}$ is an agent assigned to a $3$-cycle which indicates that $v_i$, $v_j$ and $v_k$ form a triangle.

\begin{figure}[ht]
    \centering
    \begin{subfigure}[b]{0.3\linewidth}
        \centering
        \begin{tikzpicture}[main/.style = {node distance={25mm}, thick, draw, circle}]
            \node[main] (1) {$v_1$};
            \node[main] (2) [above right of=1] {$v_2$};
            \node[main] (3) [below right of=1] {$v_3$};
            \node[main] (4) [above left of=1] {$v_4$};
            \draw[line width=1mm] (1) -- (2);
            \draw[line width=1mm] (1) -- (3);
            \draw[line width=1mm] (2) -- (3);
            \draw (1) -- (4);
            \draw (2) -- (4);
            \draw (3) to [out=180,in=270] (4);
        \end{tikzpicture}
        \vspace{2cm}
    \end{subfigure}
    \hspace{0.05\linewidth}
    \begin{subfigure}[b]{0.6\linewidth}
        \centering
        \begin{tikzpicture}[main/.style = {node distance={25mm}, thick, draw, circle}]
            \node[main] (1) at (1,6) {$p_{1,1}^{2,3}$};
            \node[main] (2) at (-1,6) {$p_{2,1}^{1,3}$};
            \node[main] (3) at (0,7.732) {$p_{3,1}^{1,2}$};
            \node[main] (v0) at ({360/17 * 0}:4) {$p_{1,2}^{2,3}$};
            \node[main] (v1) at ({360/17 * 1}:4) {$p_{1,3}^{2,3}$};
            \node[main] (v2) at ({360/17 * 2}:4) {$p_{1,1}^{2,4}$};
            \node[main] (v3) at ({360/17 * 3}:4) {$p_{1,2}^{2,4}$};
            \node[main] (v4) at ({360/17 * 4}:4) {$p_{1,3}^{2,4}$};
            \node[main] (v5) at ({360/17 * 5}:4) {$p_{1,1}^{3,4}$};
            \node[main] (v6) at ({360/17 * 6}:4) {$p_{1,2}^{3,4}$};
            \node[main] (v7) at ({360/17 * 7}:4) {$p_{1,3}^{3,4}$};
            \node[main] (v8) at ({360/17 * 8}:4) {$q_1^1$};
            \node[main] (v9) at ({360/17 * 9}:4) {$q_1^2$};
            \node[main] (v10) at ({360/17 * 10}:4) {$q_1^3$};
            \node[main] (v11) at ({360/17 * 11}:4) {$q_1^4$};
            \node[main] (v12) at ({360/17 * 12}:4) {$q_1^5$};
            \node[main] (v13) at ({360/17 * 13}:4) {$q_1^6$};
            \node[main] (v14) at ({360/17 * 14}:4) {$q_1^7$};
            \node[main] (v15) at ({360/17 * 15}:4) {$q_1^8$};
            \node[main] (v16) at ({360/17 * 16}:4) {$q_1^9$};
            \draw (1) -- (2);
            \draw (1) -- (3);
            \draw (2) -- (3);
            \draw (v0) -- (v1);
            \draw (v1) -- (v2);
            \draw (v2) -- (v3);
            \draw (v3) -- (v4);
            \draw (v4) -- (v5);
            \draw (v5) -- (v6);
            \draw (v6) -- (v7);
            \draw (v7) -- (v8);
            \draw (v8) -- (v9);
            \draw (v9) -- (v10);
            \draw (v10) -- (v11);
            \draw (v11) -- (v12);
            \draw (v12) -- (v13);
            \draw (v13) -- (v14);
            \draw (v14) -- (v15);
            \draw (v15) -- (v16);
            \draw (v16) -- (v0);

        \end{tikzpicture}
    \end{subfigure}
    \caption{On the left, subgraph of an example PIT instance, with one triangle from the partition highlighted. On the right, part of an envy-free arrangement. We can see that in the $17$-cycle every agent from $P_i \backslash \{p_{1,1}^{2,3}\}$ is adjacent to their immediate predecessor and successor in $P_i \backslash \{p_{1,1}^{2,3}\}$.}
    \label{fig:ExampleTheoremEnvyFreeWUtilityStrict}
\end{figure}

An example of part of this arrangement is given in Figure \ref{fig:ExampleTheoremEnvyFreeWUtilityStrict}. This arrangement, which we call $\sigma$, is valid, because for each $P_i$, exactly one agent is assigned to a vertex in a $3$-cycle and the other $17$ agents are arranged into the same $17$-cycle. It is also an envy-free arrangement, as we now show.

\begin{lemma}
The arrangement $\sigma$ obtained from the partition into triangles in $G$ is envy-free.
\end{lemma}
\begin{proof}\renewcommand{\qedsymbol}{\ensuremath{\blacksquare}}
Under $\sigma$, agents from $S$ (respectively $T$) have utility at least $3$. If they swap seats with an agent in the same cycle, their utility is unchanged, because they have the same set of neighbours. If they swap seats with an agent from a different graph component, at least one of their neighbours will not be in $S$ (respectively $T$), which means that they will give a valuation strictly smaller than $3$ to that agent, which under W-utility means that their utility is less than $3$.

Similarly, agents belonging to some $P_i$ who are arranged into a $3$-cycle have utility at least $3$. If they swap seats with an agent from the same cycle their utility is unchanged, and if they swap seats with any other agent, they will give a valuation smaller than $3$ to at least one of their neighbours, and so their utility would be less than $3$.

Let us now consider agents assigned to a $17$-cycle. For every such agent $a$, either (i) $a$ is arranged next to two agents to which $a$ gives a valuation of at least $2$, or (ii) to an agent to which $a$ gives a valuation of at least $2$ and another agent to which $a$ gives a valuation of at least $1.5$. In case (i), if $a \neq p_{i,1}^{j,k}$ for any $j,k$, then $a$ is adjacent to the two agents that $a$ gives a highest valuation to, which means that if $a$ were to swap seats her utility would decrease. If $a = p_{i,1}^{j,k}$ for some $j,k$, then there are two agents that $a$ gives a valuation of at least $3$ to, which are $p_{j,1}^{i,k}$ and $p_{k,1}^{i,j}$. However, these two agents are in two $17$-cycles such that $a$ gives a valuation of less than $1$ to every other agent in these cycles. Hence, if $a$ were to swap seats her utility would decrease.

Finally, consider case (ii), where $a$ is assigned to a $17$-cycle and gives a valuation of at least $2$ to one neighbour and a valuation between $1.5$ and $2$ to another. In this case $a = q_i^l$, $a = p_{i,2}^{j,k}$ or $a = p_{i,3}^{j,k}$, for some $l,j,k$, and there exists an agent $p_{i,1}^{\alpha,\beta}$ such that $f_a(p_{i,1}^{\alpha,\beta}) \geq 2$. However, $p_{i,1}^{\alpha,\beta}$ is assigned to some $3$-cycle, and $a$ gives a valuation of less than $1$ to the other two agents in the $3$-cycle, so her utility would decrease if she swaps seats to be adjacent to $p_{i,1}^{\alpha,\beta}$. Similarly, the utility of $a$ would decrease if she were to swap seats with any other agent, because she would give a valuation of less than $1.5$ to at least one of her neighbours.
\end{proof}

Conversely, assume there exists an envy-free arrangement $\sigma$ in $I$. We are going to show that this implies that agents in $S$ belong to one $3$-cycle, and agents in $T$ belong to another one. First, assume that some agent $a \notin S \cup T$ has in her neighbourhood some $t_i \in T$. Then $u_{\sigma}(a) < 0$. As the seat graph $H$ has more than three connected components, $a$ can switch seats with an agent $b$ allocated to a connected component with no agents from $T$, so $a$ would envy $b$. Hence, no agent from $T$ has in her neighbourhood such an $a$, which means that for any $t_i$, her two neighbours are in $S \cup T$. Now, assume that some $t_i$ is arranged into a $17$-cycle, and her two neighbours are from $S \cup T$. Then, if either of these neighbours are adjacent to an agent not in $S$ or $T$, said neighbour would envy $t_i$. Therefore, the neighbours of $t_i$ are adjacent to agents in $S \cup T$ too, which means that there are at least $5$ agents from $S \cup T$ arranged together in the same $17$-cycle. As $|S \cup T| = 6$, there are either $5$ or all $6$ agents from $S \cup T$ arranged together in the same $17$-cycle. However, this implies that two agents from $S \cup T$ have a neighbour not in $S \cup T$, and thus they envy $t_i$, a contradiction to the envy-freeness of $\sigma$.

Hence, all agents in $T$ are arranged into $3$-cycles that only include agents in $S \cup T$. Because $|S \cup T| = 6$, either all agents in $T$ belong to the same cycle, or some $t_i$ is in a cycle with two agents from $S$, and the other two agents from $T$ are in a cycle with some $s_j$. However, in this case $t_i$ would envy $s_j$. Thus, all agents from $T$ belong to the same $3$-cycle. Now, if some $s_i \in S$ has in her neighbourhood an agent $a \notin S \cup T$, then $u_{\sigma}(s_i) < 2$, and thus she will feel envy towards any agent $t_j \in T$, because $u_{\sigma^{s_i \leftrightarrow t_j}}(s_i) \geq 2$. Hence, all agents in $S$ are allocated to the same $3$-cycle.

We now show that in $\sigma$, the agents assigned to each $17$-cycle belong to the same subset $P_i \subset P$. For a contradiction, assume the opposite. Then there are two agents, $a \in P_i$ and $b \in P_j$, that belong to different subsets and are adjacent to each other. However, because $f_a(b) = f_b(a) < 1$, $u_{\sigma}(a) < 1$. As $u_{\sigma^{a \leftrightarrow s_1}}(a) \geq 1$, $a$ envies $s_1$, a contradiction to the envy-freeness of $\sigma$. Therefore, all agents arranged into a $17$-cycle belong to the same $P_i$. As there are $3n$ $17$-cycles and $3n$ $P_i$'s, for every $P_i$, seventeen agents belonging to it are assigned to some $17$-cycle.

For every $P_i$, $|P_i| = 18$, so exactly one agent from every $P_i$ is not assigned to a vertex from a $17$-cycle, and thus is assigned to a $3$-cycle. If this agent is not of type $p_{i,1}^{j,k}$, then she gives a valuation of less than $1$ to every agent outside of $S \cup P_i$, and so her utility in $\sigma$ is $0$, which means that she envies $s_1$, a contradiction. Hence, the only agent from $P_i$ not assigned to a $17$-cycle is of type $p_{i,1}^{j,k}$. In order for her utility to be at least $1$ and thus not be envious of $s_1$, the two other vertices in the cycle she has been assigned to must be occupied by $p_{j,1}^{i,k}$ and $p_{k,1}^{i,j}$, all of which have utility at least $3$ under $\sigma$. That is, we have a triangle in the original instance with vertices $v_i, v_j, v_k$. As only one agent from $P_i$ is in some cycle component, we form only one triangle with vertex $v_i$, meaning that we obtain a partition into triangles in $G$.
\end{proof}

For S-utility, finding an envy-free arrangement or reporting that none exists is generally NP-hard. Surprisingly, if we restrict ourselves to the case where agents' preferences are $1$-dimensional, it can be the case that no envy-free arrangement can possibly exist, if the seat graph belongs to certain graph classes. We later show this in Theorem \ref{thm:EnvFree1dNotExists}, but, in order to prove it, we need the following lemma.

\begin{lemma}
Let the seat graph of a \textsc{seat arrangement} instance with S-utility be a cycle with at least four vertices. Let agents' preferences be $1$-dimensional, with unique positions. Then for every arrangement $\sigma$ an agent will envy one of its neighbours under $\sigma$.
\label{lem:Ef1dcycle}
\end{lemma}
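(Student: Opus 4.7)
The plan is to focus on the rightmost agent on the real line and show directly that either it or one of its two cycle-neighbours wants to swap with the other. I would first order the agents by their positions $x_1 < \dots < x_n$ (well defined since positions are unique) and let $p$ be the agent at $x_n$. Its two cycle-neighbours, call them $a$ and $q$, both sit strictly to the left of $p$ on the line, and by unique positions $d(p,a) \neq d(p,q)$; after renaming if necessary one may assume $q$ is the closer of the two, giving the chain $x_a < x_q < x_p$.

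Next I would introduce $b$, the cycle-neighbour of $q$ other than $p$. The assumption that the cycle has at least four vertices enters precisely here, since it forces $b \neq a$ (in a $3$-cycle, $q$'s other neighbour would already be $a$). Combined with uniqueness of positions this gives $x_b \neq x_a$. Using $f_p(x) = D - d(p,x) + 1$ and noting that the edge $\{p,q\}$ contributes $f_p(q)$ to $p$'s utility in both $\sigma$ and $\sigma^{p \leftrightarrow q}$, the key computation reduces to $u_{\sigma^{p \leftrightarrow q}}(p) - u_{\sigma}(p) = f_p(b) - f_p(a) = d(p,a) - d(p,b)$, and symmetrically $u_{\sigma^{p \leftrightarrow q}}(q) - u_{\sigma}(q) = d(q,b) - d(q,a)$.

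A short case split on the position of $b$ relative to $a$ then finishes the proof. If $x_b > x_a$, then since $x_p$ is the maximum coordinate one has $d(p,b) = x_p - x_b < x_p - x_a = d(p,a)$, so $p$ envies its cycle-neighbour $q$. If $x_b < x_a$, then the chain $x_b < x_a < x_q$ gives $d(q,b) > d(q,a)$, so $q$ envies its cycle-neighbour $p$. The most delicate part will be the bookkeeping for the swap $\sigma^{p \leftrightarrow q}$ when $p$ and $q$ are adjacent (to confirm that the $f_p(q)$ term cancels cleanly on both sides of the utility comparison), together with pinpointing where $n \geq 4$ and uniqueness of positions each force the strict inequality; there is no deeper obstacle.
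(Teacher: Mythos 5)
Your proof is correct, and it takes a genuinely different and considerably more economical route than the paper. The paper proves the lemma by induction on the cycle length: a base case that enumerates the six essentially distinct arrangements of four agents on a $4$-cycle, followed by an inductive step that inserts the new rightmost agent into an arbitrary position and analyses three cases according to whether it lands adjacent to the previously envious pair. You instead give a direct extremal argument: take the rightmost agent $p$, let $q$ be the closer of its two seat-neighbours on the line and $b$ the other seat-neighbour of $q$, observe that $n \geq 4$ and uniqueness of positions force $a$, $b$, $p$, $q$ to be four distinct agents with $x_a \neq x_b$, and then the cancellation of the shared edge $\{p,q\}$ reduces the swap comparison to $u_{\sigma^{p \leftrightarrow q}}(p) - u_{\sigma}(p) = d(p,a) - d(p,b)$ and $u_{\sigma^{p \leftrightarrow q}}(q) - u_{\sigma}(q) = d(q,b) - d(q,a)$, after which the dichotomy $x_b > x_a$ versus $x_b < x_a$ immediately yields envy in one direction or the other. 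I checked the bookkeeping you flagged as delicate: after the swap, $p$'s neighbours are $q$ and $b$ while $q$'s neighbours are $a$ and $p$, so the $f_p(q)$ and $f_q(p)$ terms do cancel exactly as you claim, and the $4$-cycle case (where $a$ and $b$ are themselves adjacent) causes no trouble since only the utilities of $p$ and $q$ are examined. Your argument avoids the induction entirely, makes the roles of the hypotheses ($n \geq 4$ forcing $b \neq a$, uniqueness forcing the strict inequality) completely transparent, and additionally pinpoints a concrete exchange-blocking-style pair (the rightmost agent and its closer neighbour, or that neighbour and the rightmost agent); the paper's inductive proof offers no compensating advantage beyond its explicit catalogue of the $4$-cycle arrangements.
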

\begin{proof}
We will prove the lemma by induction. The base case is an instance of \textsc{seat arrangement} $I = (P,F,G,S)$, where $P = \{p,q,r,s\}$, agents' preferences are derived from a $1$-dimensional space where $l_p < l_q < l_r < l_s$, and $G$ consists of a cycle with $4$ vertices, labelled $v_1$ to $v_4$, such that $v_i$ is a neighbour of $v_{i+1}$ ($i \mod 4$). For this part of the proof, $(x,y,z,w)$ means that agent $x$ is assigned to $v_1$, $y$ to $v_2$, $z$ to $v_3$, and $w$ to $v_4$. Without loss of generality, let us consider every arrangement where $p$ is assigned to $v_1$ (so arrangements of the type $(a,x,y,z)$). In arrangement

\begin{enumerate}
    \item $\sigma = (p,q,r,s)$, $p$ envies $q$, because $u_{\sigma^{p \leftrightarrow q}}(p) = f_p(q) + f_p(r) > f_p(q) + f_p(s) = u_{\sigma}(p)$
    \item $\sigma = (p,q,s,r)$, $q$ envies $p$, because $u_{\sigma^{q \leftrightarrow p}}(q) = f_q(r) + f_q(p) > f_q(p) + f_q(s) = u_{\sigma}(q)$
    \item $\sigma = (p,r,q,s)$, $p$ envies $r$, because $u_{\sigma^{p \leftrightarrow r}}(p) = f_p(q) + f_p(r) > f_p(r) + f_p(s) = u_{\sigma}(p)$
    \item $\sigma = (p,r,s,q)$, $q$ envies $p$, because $u_{\sigma^{q \leftrightarrow p}}(q) = f_q(r) + f_q(p) > f_q(p) + f_q(s) = u_{\sigma}(q)$
    \item $\sigma = (p,s,q,r)$, $p$ envies $r$, because $u_{\sigma^{p \leftrightarrow r}}(p) = f_p(q) + f_p(r) > f_p(r) + f_p(s) = u_{\sigma}(p)$
    \item $\sigma = (p,s,r,q)$, $p$ envies $q$, because $u_{\sigma^{p \leftrightarrow q}}(p) = f_p(q) + f_p(r) > f_p(q) + f_p(s) = u_{\sigma}(p)$
\end{enumerate}

All other arrangements, where agent $p$ is assigned to $v_2$, $v_3$, or $v_4$, are equivalent to these six.

Now, assume that if the cycle from the seat graph has $k$ vertices, an agent will envy one of its neighbours under any arrangement. Consider the case with $k+1$ vertices. The $k+1$ agents about to be assigned to these vertices are the original $k$ agents plus an extra agent $z$, which is located to the right of all other agents in the $1$-dimensional space from which preferences are derived.

Take any assignment $\sigma$ on all agents but $z$, on the original cycle with $k$ vertices. By the assumptions, there is at least one agent $q$ who is envious of one of her neighbours $r$. Let $p$ be the other neighbour of $q$, and $s$ the other neighbour of $r$. Let $\sigma'$ be an arrangement of the agents onto a cycle with $k+1$ vertices, in which every agent has the same two neighbours as in $\sigma$, except for two neighbouring agents $a$ and $b$, who in $\sigma'$ have $z$ between them. That is, we take $\sigma$, and between $a$ and $b$ we introduce a new vertex which is assigned to $z$. If, in $\sigma'$, we place $z$ anywhere except between $p$ and $q$, between $q$ and $r$, or between $r$ and $s$, then $q$ will still be envious of $r$, because $u_{\sigma}(q) = u_{\sigma'}(q)$, which is smaller than $u_{\sigma^{q \leftrightarrow r}}(q) = u_{\sigma'^{q \leftrightarrow r}}(q)$. We now look at each of these three cases separately.

\textbf{Case 1:} let $z$ be arranged between $r$ and $s$ in $\sigma'$, with no agent envying a neighbour. The positions of the agents in $G$ are shown is Figure \ref{fig:EnvyFree1DimSUtilCase1}. As $q$ was envious of $r$ and is not anymore, we have that $f_q(s) + f_q(r) > f_q(p) + f_q(r) \geq f_q(r) + f_q(z)$, so $d(q,s) < d(p,q) \leq d(q,z)$.
As $r$ is not envious of $q$ in $\sigma'$, $f_r(p) + f_r(q) \leq f_r(q) + f_r(z)$, from which we get that $d(r,z) \leq d(p,r)$. As $z$ is the rightmost agent, this inequality implies that $r$ is to the right of $p$. Likewise, as $r$ is not envious of $z$, $f_r(z) + f_r(s) \leq f_r(q) + f_r(z)$ and thus $d(q,r) \leq d(r,s)$, so $r$ is closer to $q$ than to $s$. $z$ is not envious of $r$, so $f_z(q) + f_z(r) \leq f_z(r) + f_z(s)$ and thus $d(z,s) \leq d(q,z)$. Hence $s$ is to the right of $q$ in the $1$-dimensional space. Also, because $q$ is closer to $s$ than to $p$, and $s$ is to the right of $q$, $q$ is to the right of $p$. Finally, $r$ is between $p$ and $s$, because $r$ is to the right of $p$, $s$ is to the right of $q$, and $r$ is closer to $q$ than to $s$, implying than $r$ is to the left of $s$.

Therefore, the possible ways in which these agents are ordered in the $1$-dimensional space are $l_p < l_r < l_q < l_s < l_z$, and $l_p < l_q < l_r < l_s < l_z$.

However, with the first order we get a contradiction, as $d(r,z) \leq d(p,r)$ implies $d(q,z) < d(p,q)$, but we know that $d(p,q) \leq d(q,z)$. Assuming the second order is true, if the cycle has $5$ vertices, $s$ is next to $p$, so $s$ is envious of $z$, as $f_s(z) + f_s(p) < f_s(z) + f_s(r)$. Hence, let $l$ be an agent next to $s$ in the seat graph. If $l$ is to the left of $r$ in the $1$-dimensional space then $f_d(z) + f_d(l) < f_d(z) + f_d(r)$, so $s$ would envy $z$. Thus, $l$ is to the right of $r$. However, in this case, $z$ envies $s$, as $f_z(r) + f_z(s) < f_z(s) + f_z(l)$. Thus, in $\sigma'$ an agent will always envy a neighbour.

\begin{figure}[ht]
    \centering
    \begin{tikzpicture}[main/.style = {node distance={25mm}, thick, draw, circle}]
        \node[main] (p) at (-1.5,0) {$p$};
        \node[main] (q) at (-1.5/1.41,1.5/1.41) {$q$};
        \node[main] (r) at (0,1.5) {$r$};
        \node[main] (z) at (1.5/1.41,1.5/1.41) {$z$};
        \node[main] (s) at (1.5,0) {$s$};
        \draw (p) -- (q);
        \draw (q) -- (r);
        \draw (r) -- (z);
        \draw (z) -- (s);
    \end{tikzpicture}
    \caption{Positions of the agents in part of the seat graph $G$. Case $1$.}
    \label{fig:EnvyFree1DimSUtilCase1}
\end{figure}

\textbf{Case 2:} let $z$ be between $q$ and $r$ in $\sigma'$. The positions of the agents in $G$ are shown is Figure \ref{fig:EnvyFree1DimSUtilCase2}. As $q$ was envious of $r$ in $\sigma$, $f_q(r) + f_q(p) < f_q(r) + f_q(s)$, so $d(q,s) < d(p,q)$. If $z$ is not to be envious of $q$ in $\sigma'$, then $r$ is to the right of $p$ in the $1$-dimensional space, as otherwise $f_z(p) > f_z(r)$. Similarly, $q$ is to the right of $s$, or $z$ would envy $r$ under $\sigma'$. If $q$ is to not be envious of $z$, then $d(p,q) \leq d(q,r)$. Similarly, $d(r,s) \leq d(q,r)$, or $r$ would envy $z$. If, in the $1$-dimensional space, $q$ is to the left of $p$, or between $p$ and $r$, then, as $q$ is to the right of $s$, $d(r,s) > d(q,r)$, a contradiction. Thus, $q$ is to the right of $r$, and because $r$ is to the right of $p$, $d(p,q) > d(q,r)$. However, we stated before that $d(p,q) \leq d(q,r)$, or else $q$ is envious of $z$ in $\sigma'$, a contradiction. Thus, regardless of the positions of the agents in the $1$-dimensional space, under $\sigma'$ at least one agent will envy one of its neighbours.

\begin{figure}[ht]
    \centering
    \begin{tikzpicture}[main/.style = {node distance={25mm}, thick, draw, circle}]
        \node[main] (p) at (-1.5,0) {$p$};
        \node[main] (q) at (-1.5/1.41,1.5/1.41) {$q$};
        \node[main] (z) at (0,1.5) {$z$};
        \node[main] (r) at (1.5/1.41,1.5/1.41) {$r$};
        \node[main] (s) at (1.5,0) {$s$};
        \draw (p) -- (q);
        \draw (q) -- (z);
        \draw (z) -- (r);
        \draw (r) -- (s);
    \end{tikzpicture}
    \caption{Positions of the agents in part of the seat graph $G$. Case $2$.}
    \label{fig:EnvyFree1DimSUtilCase2}
\end{figure}

\textbf{Case 3:} assume that $z$ is placed between $p$ and $q$ in $\sigma'$ The positions of the agents in $G$ are shown is Figure \ref{fig:EnvyFree1DimSUtilCase1}. As $q$ does not envy $r$ anymore, $f_q(z) + f_q(r) \geq f_q(r) + f_q(s)$, so $f_q(z) \geq f_q(s)$. As $z$ is the rightmost agent, $q$ must be to the right of $s$ in the $1$-dimensional space. We now consider the position of $p$. First, as $q$ used to envy $r$, we have that $f_q(p) < f_q(s)$. Thus, $p$ cannot be to the right of $q$, because in that case $f_q(p) > f_q(z) \geq f_q(s) > f_q(p)$. Similarly, she cannot be between $q$ and $s$, because that implies that $f_q(p) < f_q(s)$, a contradiction. Hence, $p$ must be to the left of $s$. $r$ is to the right of $p$, too, or else $f_r(p) + f_r(q) > f_r(q) + f_r(s)$, which would mean that in $\sigma$ $r$ envies $q$, which by symmetry with the first case implies that $\sigma'$ is not envy-free. Thus, we have that $f_z(p) + f_z(q) < f_z(q) + f_z(r)$, so $z$ is envious of $b$.

\begin{figure}[ht]
    \centering
    \begin{tikzpicture}[main/.style = {node distance={25mm}, thick, draw, circle}]
        \node[main] (p) at (-1.5,0) {$p$};
        \node[main] (z) at (-1.5/1.41,1.5/1.41) {$z$};
        \node[main] (q) at (0,1.5) {$q$};
        \node[main] (r) at (1.5/1.41,1.5/1.41) {$r$};
        \node[main] (s) at (1.5,0) {$s$};
        \draw (p) -- (z);
        \draw (z) -- (q);
        \draw (q) -- (r);
        \draw (r) -- (s);
    \end{tikzpicture}
    \caption{Positions of the agents in part of the seat graph $G$. Case $3$.}
    \label{fig:EnvyFree1DimSUtilCase3}
\end{figure}

In all cases, if in every assignment with $k$ agents one agent envies a neighbour, in every assignment with $k+1$ agents that is the case as well. Therefore, by induction, no envy-free assignment exists in cycles with $4$ or more agents.
\end{proof}

We use the previous lemma in the proof of the following theorem, which shows the impossibility of finding an envy-free arrangement if agents' preferences are $1$-dimensional, the utility type is S-utility, and one of the connected components of the seat graph is a path or a cycle.

\begin{theorem}
Let $I = (P,F,G,S)$ be an instance of \textsc{seat arrangement} with S-utility, $1$-dimensional preferences with unique positions, and a seat graph containing at least one path component with at least three vertices, or a cycle component with at least four. Then no envy-free assignment exists.
\label{thm:EnvFree1dNotExists}
\end{theorem}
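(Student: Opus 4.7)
The plan is to handle the two seat-graph conditions separately, noting that in each case an envying pair arises entirely within the offending component, which suffices to rule out envy-freeness of the full arrangement: if $p$ envies $q$ in $\sigma$ restricted to a component (both $p$ and $q$ lying in that component), then swapping them affects nothing outside the component, so $p$ envies $q$ in the global arrangement too.

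First I would dispense with the path case, since this is where the stronger ``positive preferences'' statement lives. Let $C$ be a path component with vertices $v_1,\dots,v_k$ and $k\geq 3$, and let $\sigma$ be an arbitrary arrangement. Write $c=\sigma^{-1}(v_1)$, $b=\sigma^{-1}(v_2)$, $d=\sigma^{-1}(v_3)$. Then $u_\sigma(c)=f_c(b)$, while in $\sigma^{c\leftrightarrow b}$ agent $c$ sits at $v_2$ with neighbours $b$ (now at $v_1$) and $d$ (unchanged at $v_3$), so $u_{\sigma^{c\leftrightarrow b}}(c)=f_c(b)+f_c(d)$. Under positive preferences $f_c(d)>0$, hence $c$ envies $b$. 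Since $1$-dimensional preferences are positive by definition, this settles both the general ``positive preferences'' claim and the $1$-dimensional case for path components.

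For the cycle case, let $C$ be a cycle component with at least four vertices under a $1$-dimensional preference profile with unique positions. Applying Lemma~\ref{lem:Ef1dcycle} directly to the \textsc{Seat Arrangement} sub-instance whose seat graph is $C$ and whose agents are those placed on $C$ by $\sigma$, we obtain an agent on $C$ that envies one of its cycle-neighbours. By the component observation above, this envy carries over to the full arrangement $\sigma$ on $G$.

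The main obstacle, namely the cycle case, has already been absorbed into Lemma~\ref{lem:Ef1dcycle}, so what remains in the theorem's proof is essentially bookkeeping: the short end-of-path swap argument, the remark that $1$-dimensional preferences are positive, and the localisation argument that reduces global envy-freeness to envy-freeness within each connected component of the seat graph.
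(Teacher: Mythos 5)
Your proposal is correct and follows essentially the same route as the paper: the same end-of-path swap argument (an end-vertex agent gains a strictly positive second neighbour by swapping inward, which covers positive and hence $1$-dimensional preferences) and the same invocation of Lemma~\ref{lem:Ef1dcycle} on the restriction of $\sigma$ to the cycle component. Your explicit localisation remark, that envy between two agents in one component persists in the global arrangement, is treated implicitly in the paper but is a welcome clarification rather than a divergence.
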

\begin{proof}
We first prove the result for paths. Let $a, b, c \in P$ be agents in a \textsc{seat arrangement} instance with $1$-dimensional preferences with unique positions, and let $\sigma$ be an arrangement that assigns $a$ to the end vertex of a path component of the seat graph, $b$ to the neighbour of the vertex assigned to $a$, and $c$ to the other neighbour of the vertex assigned to $b$. As preferences are strictly positive, and the utility type is S-utility, $u_{\sigma^{a \leftrightarrow b}}(a) = f_a(b) + f_a(c) > f_a(b) = u_{\sigma}(a)$. Hence, $a$ envies $b$.

Regarding the result for cycles, let $I = (P,F,G,S)$ be an instance of \textsc{seat arrangement}, such that preferences are derived from a $1$-dimensional space, and $G$ contains a cycle component with four vertices or more. Let $\sigma$ be any arrangement. Then, by Lemma \ref{lem:Ef1dcycle}, an agent assigned to the cycle component with $4$ or more vertices envies another agent assigned to the same cycle, so $\sigma$ is not envy-free. As $\sigma$ was arbitrary, $I$ does not have an envy-free arrangement.
\end{proof}

In the proof of the result for path graphs in Theorem \ref{thm:EnvFree1dNotExists} we only used the fact that $1$-dimensional preferences are positive. Thus, we get the following corollary.

\begin{corollary}
Let $I = (P,F,G,S)$ be an instance of \textsc{seat arrangement} with S-utility, positive preferences, and a seat graph containing at least one path component with at least three vertices. Then $I$ has no envy-free arrangement.
\end{corollary}

\section{Maximum (MWA) and maximin (MUA) assignments}
\label{sec:MaxMaxmin}
We now show our results on MWA and MUA. For any \textsc{seat arrangement} instance that has utility type S-utility, MWA is NP-hard even with symmetric and binary preferences \cite{bodlaender_hedonic_2020}, and, if Spanning Subgraph Isomorphism of regular graphs is NP-hard, so is MUA, even with symmetric and binary preferences \cite{bodlaender_hedonic_2020}. We improve the result for MUA and conclude that the problem is definitely NP-hard. We also show that, with S-utility, both MWA and MUA remain NP-hard even with symmetric and strict preferences. Finally, we extend these results to B-utility and W-utility.

\begin{theorem}
For either B-utility, S-utility, or W-utility, MWA and MUA are NP-hard, even if preferences are binary and symmetric, or symmetric and strict, and the seat graph is planar and has bounded degree.
\label{thm:MaxMaxminSymm}
\end{theorem}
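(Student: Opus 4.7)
My plan is to reduce from Partition Into Triangles (PIT) of maximum degree $4$, re-using the construction of Theorem \ref{thm:EfbUtilityTies}: the same agent set $P$, the same planar bounded-degree seat graph $H$ (made of $n$ triangles and $3n$ stars), and the same binary symmetric valuations. First I would observe that the ``good'' arrangement built in the forward direction of Theorem \ref{thm:EfbUtilityTies} places every edge of $H$ between two agents who value each other at $1$, and so is simultaneously optimal under all three utility types: every agent's utility is $1$, giving minimum utility $1$ for each of B-, S-, and W-utility, welfare $|P|$ under B- and W-utility, and welfare $2|E(H)|$ under S-utility.

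For the converse I would re-invoke Claims 1 and 2 in the proof of Theorem \ref{thm:EfbUtilityTies}, whose combined conclusion is that if every agent has at least one preferred neighbour under $\sigma$, then $\sigma$ encodes a PIT solution. For MUA, ``minimum utility at least $1$'' is equivalent to this hypothesis under B- and S-utility and strictly stronger than it under W-utility, so in all three cases the hypothesis is met. For MWA, achieving the stated upper bound on welfare likewise forces the hypothesis: the B-utility cap $|P|$ forces each agent to have a preferred neighbour, while the S- and W-utility caps force every edge of $H$ to be valued $1$ on both sides. Hence all six combinations of MWA/MUA and B-/S-/W-utility are NP-hard under binary symmetric preferences, and in particular the MUA-with-S-utility result of \cite{bodlaender_hedonic_2020}, previously conditional on Spanning Subgraph Isomorphism, becomes unconditional.

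To lift the hardness to symmetric strict preferences, I would apply the $\varepsilon$-perturbation trick from Theorem \ref{thm:EfsUtility}: replace every $0$-valuation and break every remaining tie among $1$-valuations using pairwise distinct positive constants, all perturbations summing to less than $1/2$, while preserving symmetry. The good arrangement then has minimum utility strictly greater than $1$ and welfare strictly above the binary value under each utility type. In a no-instance the structural argument still forces some star-leaf to be assigned an agent that the centre does not prefer, and since such a leaf has degree $1$, its B-, S-, and W-utility all equal this sub-$1$ value; the perturbations are small enough that the minimum utility remains below $1$ and the welfare below the yes-case threshold. Choosing thresholds $1$ for MUA and $|P|$ or $2|E(H)|$ for MWA then separates yes- from no-instances.

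The main obstacle is verifying that Claims 1 and 2 of Theorem \ref{thm:EfbUtilityTies}, originally stated for B-utility and envy-freeness, really apply unchanged to the condition ``every agent has at least one preferred neighbour'' under S- and W-utility. The key observation is that each star in $H$ has at least one leaf of degree $1$, so under any of the three utility types an unliked leaf neighbour immediately drops its utility to $0$; this is precisely what forces each $p_i$ onto the centre of the $i$-th star and fixes the rest of the case analysis, allowing the structural forcing to be applied uniformly.
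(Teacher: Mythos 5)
Your proposal is correct and follows essentially the same route as the paper: reuse the Partition Into Triangles construction from Theorem \ref{thm:EfbUtilityTies}, observe that the yes-instance arrangement attains minimum utility $1$ and welfare $|P|$ (B-/W-utility) or $2|E(H)|$ (S-utility), invoke the Claim~1/Claim~2 analysis for the converse, and lift to strict preferences by the $\varepsilon$-tie-breaking of Theorem \ref{thm:EfsUtility}. Your explicit justification that the degree-$1$ star leaves make the Claim~2 forcing argument utility-type-agnostic, and your quantitative bound on the perturbations, are slightly more careful than the paper's terse treatment but do not constitute a different approach.
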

\begin{proof}
Start first with B-utility. We reuse the reduction from Theorem \ref{thm:EfbUtilityTies}. Recall that, in that reduction, from Partition Into Triangles to EFA, there exists an envy-free arrangement $\sigma$ if and only every agent has utility $1$ under $\sigma$, which happens if and only if there exists a partition into triangles in the Partition Into Triangles instance $G$. Thus, in our \textsc{seat arrangement} instance $I = (P,F,H,B)$, with $n$ agents that have binary and symmetric preferences, there exists an arrangement $\sigma$ in which $\sum_{p \in P} u_{\sigma}(p) = n$ if and only if there exists an arrangement $\sigma$ where every agent has utility at least $1$. Furthermore, there exists an arrangement $\sigma$ where every agent has utility at least $1$ if and only if there exists a partition into triangles in the PIT instance $G$. Thus, both MWA and MUA are NP-hard.

If preferences are symmetric and strict, we can modify the valuations from Theorem \ref{thm:EfbUtilityTies} by using arbitrarily small, positive and pairwise distinct $\varepsilon_i$, in a manner similar to that of Theorem \ref{thm:EfsUtility}. With these new preferences, it is still the case that there exists an arrangement $\sigma$ in which $\sum_{p \in P} u_{\sigma}(p) = n$ if and only if there exists an arrangement $\sigma$ where every agent has utility at least $1$, exists if and only if there exists a partition into triangles in $G$. Therefore, under B-utility, both MWA and MUA are NP-hard even if preferences are symmetric and strict.

In the case of W-utility, we use the same reduction. Like in the previous case, if preferences are binary and symmetric there exists an arrangement $\sigma$ in which $\sum_{p \in P} u_{\sigma}(p) = n$ if and only if there exists an arrangement $\sigma$ where every agent has utility at least $1$, which can only happen if there is a partition into triangles in $G$. Thus, an arrangement with total utility $n$ exists if and only if a partition into triangles exists, proving that MWA is NP-hard, while an arrangement with minimum utility $1$ exists if and only if an arrangement into triangles exists, proving that MUA is NP-hard. By using arbitrarily small, positive and pairwise distinct $\varepsilon_i$ to arbitrarily break ties in preferences, we prove that both results hold if preferences are symmetric and strict.

For the case of S-utility, binary and symmetric preferences, and MWA, our goal is to find an arrangement $\sigma$ in which $\sum_{p \in P} u_{\sigma}(p) = 2|E(H)|$, where $E(H)$ is the set of edges of the seat graph $H$. This can happen if and only if the score of every edge is $1$ (recall that edges are counted twice in the previous sum). The proof of Theorem \ref{thm:EfbUtilityTies} shows that a partition into triangles in $G$ implies an arrangement in which the score of every edge is $1$, and that an envy-free arrangement under B-utility implies that the score of every edge is $1$, which in turns implies the existence of a partition into triangles in $G$. Thus, there exists an arrangement $\sigma$ in which $\sum_{p \in P} u_{\sigma}(p) = 2|E(H)|$ if and only if there exists a partition in triangles in $G$, so MWA is NP-hard. The result holds if preferences are symmetric and strict, proven by arbitrarily breaking ties in preferences.

What remains is the case with S-utility, binary and symmetric preferences (or symmetric and strict), and MUA. Our goal is to find an arrangement $\sigma$ where every agent gets utility $1$ or more, which happens if every agent gives a valuation of $1$ to at least one neighbour. By Claim $2$ of Theorem \ref{thm:EfbUtilityTies} this can happen if and only if there is a partition into triangles in $G$. Like in the previous cases, we can show that the result holds for symmetric and strict preferences by arbitrarily breaking ties.
\end{proof}

In fact, in the case of B-utility, both MWA and MUA are NP-hard, even if preferences are $1$-dimensional and strict. To prove it, we will use the reduction from \textsc{Bin Packing} used in Theorem \ref{thm:EnvFree1DBUtil}.

\begin{proposition}
Let $I = (P,F,G,B)$ be a \textsc{seat arrangement} instance with B-utility, $1$-dimensional strict preferences with unique positions, and a seat graph in which every connected component is a path with the same number of vertices. Then both MWA and MUA are NP-hard.
\label{prop:MaxMaxMinBUtil1D}
\end{proposition}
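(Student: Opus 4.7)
The plan is to reuse the reduction from \textsc{Bin Packing} constructed in the proof of Theorem~\ref{thm:EnvFree1DBUtil}. Recall the key structural facts of that construction: the seat graph is $K$ vertex-disjoint paths on $B$ vertices; the agents corresponding to an item $x_i$ form a block $P_i$ of $s_i$ agents whose within-block 1D-gaps grow as $1, 1+\varepsilon, 1+2\varepsilon, \dots$ while any two blocks are separated by a 1D-gap of $2$; and each $p_i^j$ has a \emph{unique} most preferred agent, namely $p_i^{j-1}$ (or $p_i^2$ if $j=1$). The argument in Theorem~\ref{thm:EnvFree1DBUtil} shows that an arrangement in which every agent sits next to its unique most preferred agent exists if and only if each block $P_i$ occupies consecutive vertices of a single path in 1D order, which is equivalent to the Bin Packing instance being a yes-instance.

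For MWA I would take as target the a-priori upper bound $M^{*} = \sum_{p \in P} \max_{q \neq p} f_p(q)$. Under B-utility we always have $u_{\sigma}(p) \leq \max_{q \neq p} f_p(q)$ for every~$p$, so $\sum_{p} u_{\sigma}(p) \leq M^{*}$, with equality if and only if every agent is adjacent to its unique most preferred agent. Combined with the forcing lemma above, $\sum_{p} u_{\sigma}(p) = M^{*}$ is achievable iff the Bin Packing instance is positive, which yields NP-hardness of MWA.

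For MUA I would use the upper bound $T^{*} = \min_{p \in P} \max_{q \neq p} f_p(q) = D - (s_{\max}-2)\varepsilon$, where $s_{\max}=\max_i s_i$. The easy direction is immediate: the packing arrangement attains every agent's maximum valuation, so $\min_{p} u_{\sigma}(p) = T^{*}$. For the converse, I would argue that $\min_{p} u_{\sigma}(p) \geq T^{*}$ forces each $p_i^j$ to be adjacent to some $q$ with $f_{p_i^j}(q) \geq T^{*}$; the inter-block gap of $2$ ensures every cross-block valuation is at most $D-1 < T^{*}$ for $\varepsilon$ small, and the within-block gap structure makes the only candidates the two 1D-adjacent agents $p_i^{j-1}$ and $p_i^{j+1}$. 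Starting from $p_i^1$ and $p_i^{s_i}$, whose only such candidate is $p_i^2$ and $p_i^{s_i-1}$ respectively, I would then propagate along each block to conclude that $P_i$ sits contiguously on a single path in 1D order, i.e., that a valid bin packing exists.

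The main obstacle will be this last MUA propagation step. Unlike the envy-free setting, the MUA threshold does not force an interior agent $p_i^j$ to be adjacent to its unique most preferred agent, only to \emph{some} 1D-adjacent in-block agent, so a block could in principle split into two contiguous sub-blocks of size at least $2$ placed on different paths. Ruling this out will probably require either augmenting the construction with a small per-item anchoring gadget or restricting the source instance (for instance to \textsc{3-Partition} with carefully chosen item sizes) so that no such ``split'' repacking is feasible.
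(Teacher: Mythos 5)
Your overall route is the same as the paper's: both reuse the \textsc{Bin Packing} construction of Theorem~\ref{thm:EnvFree1DBUtil}. Your MWA half is correct, and your target $M^{*}=\sum_{p}\max_{q\neq p}f_p(q)$ is actually sharper than the paper's target $n\cdot(D-n\varepsilon)$: under B-utility the sum attains $M^{*}$ only if every agent is seated next to its \emph{unique} favourite, and on a disjoint union of paths this forces each block $P_i$ onto $s_i$ consecutive seats of a single path, hence a packing.

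The obstacle you flag for MUA is a genuine gap, and you should be aware that the paper's own proof does not resolve it: the paper simply asserts that ``every agent has utility at least $D-n\varepsilon$'' is equivalent to the existence of the contiguous envy-free arrangement, but that threshold (like your $T^{*}$) only forces each agent to be seat-adjacent to \emph{some} 1D-consecutive member of its own block, exactly as you observe. A block of (doubled) size at least $4$ can be split into two contiguous runs of size at least $2$ on different paths with every agent still meeting the threshold; and since after doubling all item sizes and $B$ are even and $\sum_i s_i = KB$, a first-fit procedure that splits an overflowing block at the bin boundary always realises such runs, so the relaxed condition is satisfiable for every instance and cannot witness hardness. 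Moreover, no maximin threshold can repair this within the given construction: by symmetry $f_{p_i^l}(p_i^{l+1})=f_{p_i^{l+1}}(p_i^l)$ is $p_i^{l+1}$'s \emph{best} valuation, so every agent's second-best valuation is at least $\min_p\max_{q}f_p(q)$, and maximin can never separate ``next to the favourite'' from ``next to the runner-up''. Your proposed fixes (an anchoring gadget, or a source problem such as \textsc{3-Partition} whose items cannot be usefully split) are the right direction, but as written the MUA half of your argument is incomplete --- and the same objection applies to both halves of the paper's proof, which is a further reason to prefer your $M^{*}$ formulation for MWA.
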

\begin{proof}
\vspace{-5mm}
From an instance of \textsc{Bin Packing}, with item set $X$, item size set $S$, bin capacity $C$, and number of bins $K$ we create a \textsc{seat arrangement} instance $I = (P,F,G,B)$ using the reduction from Theorem \ref{thm:EnvFree1DBUtil}. In general, we have that, if preferences are $1$-dimensional, for any two agents $p,q$, $f_p(q) = f_q(p) = D - d(p,q) + 1$, where $D$ is the maximum distance between agents, and $d(p,q)$ is the distance from $p$ to $q$ in the $1$-dimensional instance. Given how preferences are constructed, for agents $p_i^l$ and $p_i^{l+1}$, which are created from the same element $i \in I$, $f_{p_i^l}(p_i^{l+1}) = f_{p_i^{l+1}}(p_i^l) > D - (1 + n \cdot \varepsilon) + 1 = D - n \cdot \varepsilon$, where $n = |P|$, the number of agents in $I$, and $\varepsilon$ is the arbitrarily small but strictly positive constant used in the proof of Theorem \ref{thm:EnvFree1DBUtil}. For any other agents $p$ and $q$, $f_p(q) = f_q(p) \leq D - 2 + 1 = D - 1$. Thus, there exists a matching in which every agent has utility at least $D - n \cdot \varepsilon$ if and only if agents can be arranged in an envy-free arrangement like the one described in Theorem \ref{thm:EnvFree1DBUtil}, which happens if an only if the elements in $X$ can be packed into bins. Hence, MUA is NP-hard. 

As for MWA, there exists an arrangement $\sigma$ in which the sum of all utilities $\sum_{p \in P} u_{\sigma}(p)$ is at least $n\cdot(D - n \cdot \varepsilon)$ if an only if the utility of every agent is at least $D - n \cdot \varepsilon$, which in turn happens if and only if the elements in $X$ can be packed into bins, proving that MWA is NP-hard. 
\end{proof}

As in the case of Theorem \ref{thm:EnvFree1DBUtil}, this result also applies when the seat graph is a cycle or cluster graph.

\section{Conclusion}
\label{sec:Conc}
We have introduced two new types of utilities to \textsc{seat arrangement}, namely B-utility and W-utility, giving us a more complete picture of the algorithmic behaviour of the problem for different utility functions. In particular, we have seen that, in general, the complexity of a given variant of \textsc{seat arrangement} does not change between S-utility and both B-utility and W-utility. However, the computational complexity of some problems differs between the different utility types. One example is finding an exchange-stable arrangement when preferences are symmetric, because with S-utility it is PLS-complete, but with B-utility it is solvable in polynomial time. Additionally, we have studied a type of agent preference, derived from a one dimensional space, which we call $1$-dimensional preferences. In this case, if the seat graph contains a cycle or path component, and the utility type is S-utility, no envy-free arrangement can exist. Meanwhile, if preferences are $1$-dimensional and the utility type is $B$-utility, EFA, MWA, and MUA are NP-hard.

Some possible future research directions include studying the open variants of \textsc{seat arrangement} from Table \ref{tab:resultSummary}, particularly those where preferences are $1$-dimensional. We hypothesize that, with W-utility and $1$-dimensional preferences, EFA is NP-hard if the seat graph is a path graph, while an envy-free arrangement cannot exist if the seat graph contains a cycle as a connected component. On the other hand, we believe that, with S-utility and W-utility, MWA and MUA are solvable in polynomial time if preferences are $1$-dimensional.

Another possibility concerns the creation of Integer Programming or Constraint Programming models for NP-hard variants of \textsc{seat arrangement}. There has been very little empirical work on \textsc{seat arrangement}, and, given how most problems are computationally hard, it is of interest to study whether integer or constraint programming solvers can efficiently find solutions in moderately-sized instances.

\bibliographystyle{abbrv}
\bibliography{bibliography}


\end{document}